\title{Round and Bipartize for Vertex Cover Approximation}
\author{Danish Kashaev \footnote{\texttt{danish.kashaev@cwi.nl}, Centrum Wiskunde \& Informatica, Amsterdam.}  \and Guido Schäfer \footnote{\texttt{g.schaefer@cwi.nl}, Centrum Wiskunde \& Informatica and ILLC, University of Amsterdam.}}
\date{\vspace{-3ex}}
\begin{document}

\maketitle

\begin{abstract}
The vertex cover problem is a fundamental and widely studied combinatorial optimization problem. It is known that its standard linear programming relaxation is integral for bipartite graphs and half-integral for general graphs. 
As a consequence, the natural rounding algorithm based on this relaxation 
 computes an optimal solution for bipartite graphs and a $2$-approximation for general graphs. 
This raises the question of whether one can interpolate the rounding curve of the standard linear programming relaxation in a beyond the worst-case manner, depending on how close the graph is to being bipartite. In this paper, we consider a simple rounding algorithm that exploits the knowledge of an induced bipartite subgraph to attain improved approximation ratios. Equivalently, we suppose that we work with a pair $(\G, S)$, consisting of a graph with an odd cycle transversal.

If $S$ is a stable set, we prove a tight approximation ratio of $1 + 1/\rho$, where $2\rho -1$ denotes the odd girth (i.e., length of the shortest odd cycle) of the contracted graph $\tilde{\G} := \G /S$ and satisfies $\rho \in [2,\infty]$. If $S$ is an arbitrary set, we prove a tight approximation ratio of $\left(1+1/\rho \right) (1 - \alpha) + 2 \alpha$, where $\alpha \in [0,1]$ is a natural parameter measuring the quality of the set $S$. The technique used to prove tight improved approximation ratios relies on a structural analysis of the contracted graph $\tilde{\G}$, in combination with an understanding of the weight space where the fully half-integral solution is optimal. Tightness is shown by constructing classes of weight functions matching the obtained upper bounds. 

As a byproduct of the structural analysis, we obtain improved tight bounds on the integrality gap and the fractional chromatic number of 3-colorable graphs. We also discuss algorithmic applications in order to find good odd cycle transversals and show that our analysis is optimal in the following sense: the worst case bounds for $\rho$ and $\alpha$, which are $\rho = 2$ and $\alpha = 1 - 4/n$, recover the integrality gap of $2 - 2/n$ of the standard linear programming relaxation, where $n$ is the number of vertices of the graph.

\end{abstract}

\section{Introduction}

In the vertex cover problem we are given a weighted graph $\mathcal{G} = (V, E, w)$, where $w: V \mapsto \mathbb{R}_{+}$ is a non-negative weight function on the vertices, and the goal 
is to find a minimal weight subset of vertices $U \subset V$ that covers every edge of the graph, i.e., 
\[\min \{ w(U) \mid U \subset V,\ |U \cap \{i,j\}| \geq 1\ \forall (i,j) \in E \}.\]
We denote by $OPT$ an optimal subset of vertices for this problem, and by $w(OPT)$ the total weight of that solution. 
The vertex cover problem is known to be NP-complete \cite{karp1972reducibility} and APX-complete \cite{papadimitriou1988optimization}. Moreover, it was shown to be NP-hard to approximate within a factor of $7/6$ in \cite{haastad2001some}, a factor later improved to 1.36 in \cite{dinur2005hardness}. It is in fact NP-hard to approximate within a factor of $2 - \varepsilon$ for any fixed $\varepsilon > 0$ if the unique games conjecture is true \cite{khot2008vertex}. 

A natural linear programming relaxation, as well as its dual, is given by:

\begin{minipage}[t]{0.5\textwidth}
\begin{align*}
    \min \sum_{v \in V} w_v x_v  & \\
    x_u + x_v &\geq 1 \quad \forall (u,v) \in E\\
    x_v &\geq 0 \quad \forall v\in V
\end{align*}
\end{minipage}
\begin{minipage}[t]{0.5\textwidth}
\begin{align*}
    \max \sum_{e \in E} y_e  & \\
    y(\delta(v)) &\leq w_v \quad \forall v \in V\\
    y_e &\geq 0 \quad \; \; \,\forall e\in E
\end{align*}
\end{minipage}

\bigskip
\noindent 
For a given graph $\G$, we denote the primal linear program by $P(\G)$ and the dual by $D(\G)$. 
The integrality gap of the standard linear relaxation $P(\G)$ on a graph of $n$ vertices is upper bounded by $2 - 2/n$, a bound which is attained on the complete graph. In fact, a more fine-grained analysis shows that it is equal to $2 - 2/\chi^f(\G)$, where $\chi^f(\G)$ is the fractional chromatic number of the graph \cite{singh2019integrality}. An integrality gap of $2 - \varepsilon$ is proved for a large class of linear programs in \cite{arora2002proving}. It is also known that any linear program which approximates vertex cover within a factor of $2 - \varepsilon$ requires super-polynomially many inequalities \cite{bazzi2019no}.

An important property of $P(\G)$ is the fact that any extreme point solution $x^*$ is half-integral, i.e., $x^*_v \in \{0, \frac{1}{2}, 1\}$ for any vertex $v \in V$ \cite{nemhauser1975vertex}. 
This gives rise to a straightforward rounding algorithm by solving $P(\G)$ and outputting all vertices whose LP variable is at least a half, i.e., $U := \{v \in V \mid x^*_v \geq \frac12\}$.
It is easy to see that this a $2$-approximation, because $w(U) \leq 2 w(OPT)$, see \cite{hochbaum1982approximation}. Moreover, it is known that $P(\G)$ is integral for any bipartite graph \cite{kuhn1955hungarian}. 
As a consequence, the rounding algorithm returns an optimal solution if the graph is bipartite.
This raises the question of whether we can interpolate the rounding curve of the standard linear program, depending on how close the graph is to being bipartite.

\subsection*{Set-up and algorithm}
We consider the following setup. We are given a weighted non-bipartite graph $\G = (V,E, w)$ and an optimal solution $x^* \in \{0, \frac{1}{2}, 1\}$ to $P(\G)$. We denote by $V_{\alpha}:= \{v \in V \mid x^*_v = \alpha\}$ the vertices taking value $\alpha$ and by $\mathcal{G}_{\alpha} = \mathcal{G}[V_\alpha]$ the subgraph of $\mathcal{G}$ induced by the vertices $V_\alpha$ for any $\alpha \in \{0, \frac{1}{2}, 1\}$. By a standard preprocessing step, we may assume that we only work on the graph $\G_{1/2}$, since any $c$-approximate solution on this reduced graph can be lifted to a $c$-approximate solution on the original graph by adding the nodes $V_1$ to the solution \cite{nemhauser1975vertex}. In addition, we suppose that we have knowledge of an odd cycle transversal $S$ of $\G_{1/2}$, meaning that $\G_{1/2} \setminus S$ is a bipartite graph. Equivalently, $S$ intersects every odd cycle of $\G_{1/2}$.
The question of finding a good such odd cycle transversal is also tackled later in the paper. 

We consider the following simple algorithm, detailed in Algorithm~\ref{round_bipartize}. It first solves $P(\G)$, takes the vertices assigned value one by the linear program to the solution and removes all the integral nodes from the graph to arrive at $\G_{1/2}$. The algorithm then takes all the vertices in the set $S$ to the solution, removes them from the graph and solves another (now integral) linear program to get the optimal solution on the bipartite remainder. These vertices are then also added to the solution.

\begin{algorithm}[]
\caption{}
\label{round_bipartize}
\textbf{Input:} Weighted graph $\mathcal{G} = (V,E,w)$, odd cycle transversal $S \subset V_{1/2}$ \\
\textbf{Output:} Vertex cover $U \subset V$
\begin{algorithmic}[1]
\State Solve the linear program $P(\G)$ to get $V_0 \: , \: V_{1/2}$ and  $V_1$
\State Solve the integral linear program $P(\G_{1/2} \setminus S)$ to get $W \subset V_{1/2}$
\State \Return $V_1 \cup S \cup W$
\end{algorithmic}
\end{algorithm}

The question studied is the following.
What is the worst-case approximation ratio of the algorithm and which weight functions are attaining it?
Our motivation to study this question comes from the structural difference between the polyhedron of $P(\G)$ for bipartite and non-bipartite graphs. 
In particular, we are interested in identifying parameters of the problem that enable us to derive more fine-grained bounds determining the approximation ratio of the algorithm, and allow to interpolate the rounding curve of the standard linear program from 1 to 2, depending on how far the graph is from being bipartite.
As it turns out, the \emph{odd girth}, i.e., the length of the shortest odd cycle, is a key parameter determining tight bounds on the approximation ratio. It is also a natural parameter, since a graph is bipartite if and only if it does contain an odd cycle. The larger the odd girth, the closer the graph is to being bipartite. It is also shown in \cite{gyori1997graphs} that graphs with a large odd girth admit a small cardinality odd cycle transversal. 

\subsection*{Contributions and high-level view}
We first do a pre-processing step and show that we may without loss of generality focus on weighted graphs $\G = (V,E,w)$ where the weights come from a certain weight space $Q^W$. Each edge has a dual weight $y_e \geq 0$ with a total sum of $y(E) = 1$, and the weight on each node is then determined by $w_v = y(\delta(v))$. This follows from the Nemhauser-Trotter theorem, complementary slackness and an appropriate normalization.

We then do the analysis under the assumption that $S$ is a stable set, highlighting the main ideas of the analysis and the proof techniques. We show that the approximation ratio is upper bounded by $1 + 1/\rho$, where $2 \rho -1$ denotes the odd girth of the graph $\tilde{\G} := \G / S$, where all the vertices in $S$ are contracted into a single node. Note that the parameter range is $\rho \in [2,\infty]$, with $\rho = \infty$ naturally corresponding to the case where $\tilde{\G}$ is bipartite.  The proof technique involves a key concept, that we call \emph{pairwise edge-separate} feasible vertex covers. Constructing $k$ such covers allows to bound the approximation ratio by $1+1/k$. The construction of $\rho$ such covers to get the result follows from a structural understanding of the contracted graph $\tilde{\G}$. As a byproduct, this structural understanding also allows to get improved bounds on the integrality gap and the fractional chromatic number of $3$-colorable graphs. In particular, it even manages to compute an exact formula, depending on the odd girth, for the integrality gap and the fractional chromatic number of the contracted graph $\tilde{\G}$.

We then construct a class of weight functions $\mathcal{W} \subset Q^W$ where this upper bound holds with equality, thus showing that this proof technique obtains tight bounds and might have additional applications. This result can then be lifted to the case where $S$ is a general set, by introducing an additional parameter $\alpha$ counting the total dual sum of the weights on the edges inside $S$, i.e. $\alpha = y(E[S])$. This then leads to an approximation ratio interpolating the rounding curve of the standard linear program with a tight bound of $(1 + 1/\rho)(1-\alpha) + 2 \alpha$ for any values of $\rho \in [2,\infty]$ and $\alpha \in [0,1]$.

We then discuss algorithmic applications to find good such sets $S$, connecting to the \Call{MinUnCut}{} and \Call{Colouring}{} problems. Finally, we show that our analysis is optimal in the following sense: the worst case bounds for $\rho$ and $\alpha$, which are $\rho = 2$ and $\alpha = 1 - 4/n$, recover the integrality gap of $2 - 2/n$ of the standard linear programming relaxation for a graph on $n$ vertices.

\subsection*{Implications and related work}

Our analysis falls into the framework of beyond the worst-case analysis \cite{roughgarden2021beyond}. In particular, note that an odd cycle transversal always exists: we may simply take $S = V_{1/2}$, which recovers the standard $2$-approximation algorithm for vertex cover. Depending on how $S$ is chosen, our algorithm can however admit significantly better approximation ratios.

Our algorithm also connects to \emph{learning-augmented algorithms},
which have access to some prediction in their input (e.g., obtained for instance through machine learning). This prediction is assumed to come without any worst-case guarantees, and the goal is then to take advantage of it by making the algorithm perform better when this prediction is good, while still keeping robust worst-case guarantees when it is off \cite{bamas2020primal,lykouris2021competitive,purohit2018improving,lattanzi2020online,antoniadis2020online,antoniadis2020secretary}. In our case, assuming a prediction on the set $S$, robustness is guaranteed since we are never worse than a $2$-approximation. In fact, even if the predicted set is not an odd cycle transversal, one may simply greedily add vertices to it until it becomes one, while still guaranteeing a $2$-approximation. Otherwise, our results provide a precise understanding of how the approximation ratio improves depending on the predicted set $S$. In addition, once such a set $S$ is found, the parameters $\alpha$ and $\rho$ can easily be computed to see the improved guarantee on the approximation ratio. One may thus re-run the machine learning algorithm if the parameters give a bound very close to the worst-case of 2.

The odd cycle transversal number $\text{oct}(\G)$ is defined as the minimum number of vertices needed to be removed in order to make the graph bipartite. The minimum odd cycle transversal problem has been studied in terms of fixed-parameter tractability \cite{reed2004finding,kratsch2014compression}. In particular, it is the first problem where the iterated compression technique has been applied \cite{reed2004finding}, now a classical tool in the design of fixed-parameter tractable algorithms. The best known approximation algorithm for it has a ratio of $O(\sqrt{\log(n)})$ \cite{Agarwal2005OlogNA}. Another relevant concept is the \emph{odd cycle packing number} ocp$(\G)$, defined as the maximum number of vertex-disjoint odd cycles of $\G$ and satisfying ocp$(\G) \leq $ oct$(\G)$. The related maximum stable set problem has been studied in terms of ocp($\G$) in \cite{bock2014solving,artmann2017strongly,conforti2020stable, fiorini2022integer}.

A key property implying the integrality of a polyhedron is the \emph{total unimodularity} (TU) of the constraint matrix describing the underlying problem, meaning that all the square subdeterminants of the matrix are required to lie in $\{-1,0,1\}$ (see for instance \cite{schrijver1998theory,schrijver2003combinatorial}). In general, we believe it is an interesting question to study whether one may exploit the knowledge of a TU substructure in an integer program to obtain improved approximation guarantees through rounding algorithms. In our case, the knowledge of an odd cycle transveral $S$ is equivalent to the knowledge of an induced bipartite subgraph $\G' = \G \setminus S$, for which $P(\G')$ is TU. We hope the techniques introduced for the pair $(\G, S)$ can help tackle other problems.

One technique which might also benefit from our analysis is iterative rounding, which requires solving a linear program at each iteration \cite{lau2011iterative}. Having a better analysis for the case where the linear program becomes integral could potentially be used to reduce the number of iterations and allow for better guarantees, since iterative rounding can terminate at this step without losing solution quality.

Several different algorithms achieving approximation ratios of $2 - o(1)$ have been found for the weighted and unweighted versions of the vertex cover problem: \cite{karakostas2005better,halperin2002improved,bar1983local,monien1985ramsey,bar1981linear,hochbaum1983efficient}. Another large body of work is interested in exact fixed parameter tractable algorithms for the decision version: \cite{buss1993nondeterminism,balasubramanian1998improved,chen2001vertex,chen2000improvement,downey1992fixed,niedermeier1999upper,niedermeier2003efficient,stege1999improved,downey2012parameterized}.

\section{Preliminaries}
\label{section_preliminaries}
We define $\mathbb{R}_+$ to be the non-negative real numbers and $[k] := \{1, \dots, k\}$ to be the natural numbers up to $k \in \mathbb{N}$. For a vector $x \in \mathbb{R}^n$, we denote the sum of the coordinates on a subset by $x(A) := \sum_{i \in A} x_i$.  A key property of $P(\G)$ was introduced by Nemhauser and Trotter in \cite{nemhauser1975vertex}. It essentially allows to reduce an optimal solution $x^* \in \{0,\frac{1}{2},1\}^V$ to a fully half-integral solution by looking at the subgraph induced by the half-integral vertices. As before, we denote by $V_{\alpha}:= \{v \in V \mid x^*_v = \alpha\}$ the vertices taking value $\alpha$ and by $\mathcal{G}_{\alpha} = \mathcal{G}[V_\alpha]$ the subgraph induced by the vertices $V_\alpha$.
\begin{theorem}[Nemhauser, Trotter \cite{nemhauser1975vertex}]
\label{thm_nemhauser1975vertex}
Let $x^* \in \{0,\frac{1}{2},1\}^V$ be an optimal extreme point solution to $P(\G)$. Then,
$w(OPT(\mathcal{G}_{1/2})) = w(OPT(\G)) - w(V_1).$
\end{theorem}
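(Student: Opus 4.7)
The plan is to prove the equality by showing both inequalities separately, using the standard Nemhauser--Trotter argument: one direction lifts a cover of $\G_{1/2}$ to a cover of $\G$, and the other direction uses optimality of $x^*$ against a carefully constructed alternative LP solution.

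For the inequality $w(OPT(\G_{1/2})) \geq w(OPT(\G)) - w(V_1)$, I would first observe that for any feasible vertex cover $T$ of $\G_{1/2}$, the set $T \cup V_1$ is a feasible vertex cover of $\G$. To see this, take any edge $(u,v) \in E$: if both endpoints lie in $V_{1/2}$ the edge is covered by $T$; if one endpoint lies in $V_1$ that endpoint covers it; and if one endpoint $u$ lies in $V_0$ then LP feasibility forces $x^*_v \geq 1$, i.e., $v \in V_1$. Applying this to an optimal cover of $\G_{1/2}$ gives the desired inequality.

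For the reverse inequality, I would take an optimal cover $S$ of $\G$, split it as $S_\alpha := S \cap V_\alpha$ for $\alpha \in \{0, \tfrac{1}{2}, 1\}$, and define the candidate LP solution
\[
x'_v = \begin{cases} 1 & v \in S_1, \\ \tfrac{1}{2} & v \in S_0 \cup V_{1/2} \cup (V_1 \setminus S_1), \\ 0 & v \in V_0 \setminus S_0. \end{cases}
\]
The key feasibility check is that whenever $x'_u = 0$ (so $u \in V_0 \setminus S_0$), the endpoint $v$ of any edge $(u,v)$ must be in $S$ (since $S$ covers the edge) and also in $V_1$ (since $x^*_u = 0$ forces $x^*_v = 1$), hence in $S_1$, giving $x'_v = 1$. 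Comparing the objective value of $x^*$ against $x'$ using optimality of $x^*$ yields, after cancellation, the inequality $w(V_1) \leq w(S_0) + w(S_1)$. Combined with the fact that $S_{1/2}$ is a feasible cover of $\G_{1/2}$, this gives
\[
w(OPT(\G_{1/2})) + w(V_1) \leq w(S_{1/2}) + w(S_0) + w(S_1) = w(S) = w(OPT(\G)).
\]

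The main subtlety is choosing the alternative solution $x'$: it has to be LP-feasible, it must coincide with $x^*$ enough that comparison against optimality gives a usable inequality, and its objective must be bounded in terms of the weight of $S$. The rest is essentially bookkeeping. No further case analysis on $\G$ itself is needed, so I do not anticipate any genuine obstacle once the definition of $x'$ is in place.
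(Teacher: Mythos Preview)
Your proposal is correct and matches the paper's own argument essentially line for line: the same lift $T \cup V_1$ for one direction, the same optimal cover $S$ split into $S_0, S_{1/2}, S_1$, the identical auxiliary LP solution $x'$, and the same comparison $\sum w_v x^*_v \leq \sum w_v x'_v$ yielding $w(V_1) \leq w(S_0) + w(S_1)$ for the other direction. Note that in the compiled paper this proof is not printed (the authors relegate it to a reference), but the proof present in the \LaTeX{} source is exactly the standard Nemhauser--Trotter argument you reproduce.
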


\begin{corollary}
\label{lemma_rounding_type_i}
Let $x^* \in \{0,\frac{1}{2},1\}^V$ be an optimal solution to $P(\G)$. If $S \subset V_{1/2}$ is a feasible vertex cover on $\G_{1/2}$ with approximation ratio at most $\phi$, i.e., $w(S) \leq \phi \: \: w(OPT(\G_{1/2}))$, then
$w(S) +  w(V_1) \leq \phi \: w(OPT(\G)).$
\end{corollary}
\begin{proof}
The proof is an easy consequence of Theorem \ref{thm_nemhauser1975vertex} and the fact that $\phi \geq 1$:
\[w(U) + w(V_1) \leq \phi \: w(OPT(\G_{1/2})) + w(V_1) \leq \phi \: (w(OPT(\G_{1/2})) + w(V_1)) = \phi \: w(OPT(\G)).\]
\end{proof}
The previous corollary thus implies that we may restrict our attention to the graph $\G_{1/2}$, since any $\phi$-approximate solution on this reduced instance can be lifted to a $\phi$-approximate solution on the original graph by adding $V_1$ to the solution. Note that on the weighted graph $\G_{1/2}$, the solution $(\frac{1}{2},\dots,\frac{1}{2})$ is optimal.

For a given set $S \subset V$, we define $\G' := \G \setminus S = (V', E')$ to be the graph obtained by removing the set $S$ and all the incident edges to it. Hence, 
$E = E' \cup \delta(S) \cup E[S]$
where $\delta(S) = \{(u,v) \in E \mid u \in S, v \notin S\}$ and $E[S] := \{(u,v) \in E \mid u \in S, v \in S\}$. We also denote by $\tilde{\G} := \G / S = (\tilde{V}, \tilde{E})$ the graph obtained by contracting all the vertices in $S$ into a single new node $v^S \in \tilde{V}$. We allow for multiple edges, but no self-loops. The only edges present in $E$ but not in $\tilde{E}$ are thus the ones with both endpoints in $S$, i.e., $E[S]$.

\section{Weight Space}

\label{section_weight_space}
By Corollary $\ref{lemma_rounding_type_i}$, we may assume that every weighted graph $\G$ we work with has the property that the fully half-integral solution $x = (\frac{1}{2}, \dots, \frac{1}{2})$ is an optimal solution to the linear program $P(\G)$. In this section, we characterize the weight functions satisfying this assumption. 
\begin{lemma}
\label{lemma_feasible_weights}
Let $\mathcal{G} = (V,E)$ be a graph and let $w: V \to \mathbb{R}_{+}$ be a weight function. The feasible solution $(\frac{1}{2}, \dots, \frac{1}{2})$ to the linear program $P(\G)$ is optimal if and only if there exists $y \in \mathbb{R}^E_{+}$ satisfying $y(\delta(v)) = w_v$ for every $v \in V$.
\end{lemma}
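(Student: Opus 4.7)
The plan is to apply linear programming duality (specifically, strong duality together with complementary slackness) between $P(\G)$ and its dual $D(\G)$ as given in the introduction. First I observe that $x_v = \tfrac12$ for every $v \in V$ is always primal-feasible: each edge constraint $x_u + x_v = 1 \geq 1$ is satisfied (in fact, with equality), and the nonnegativity constraints hold strictly.

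For the forward direction, I would assume $x = (\tfrac12,\dots,\tfrac12)$ is optimal for $P(\G)$. By strong LP duality there exists an optimal $y \in \mathbb{R}^E_+$ for the dual $D(\G)$, so in particular $y$ is dual-feasible, which gives $y(\delta(v)) \leq w_v$ for all $v$. Since $x_v = \tfrac12 > 0$ for every $v$, complementary slackness forces the dual constraint at $v$ to be tight, i.e., $y(\delta(v)) = w_v$ for every $v \in V$.

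For the reverse direction, I would take as given a $y \in \mathbb{R}^E_+$ with $y(\delta(v)) = w_v$ for all $v$. Such a $y$ is automatically dual-feasible. To show that the half-integral $x$ is optimal, I would argue that its primal objective value coincides with the dual objective value of $y$ and then invoke weak duality. Writing this out:
\begin{equation*}
\sum_{v \in V} w_v x_v \;=\; \frac{1}{2} \sum_{v \in V} w_v \;=\; \frac{1}{2} \sum_{v \in V} y(\delta(v)) \;=\; \sum_{e \in E} y_e,
\end{equation*}
where the last equality holds because each edge $e = (u,v)$ contributes $y_e$ exactly twice in the sum $\sum_v y(\delta(v))$. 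By weak duality every feasible primal objective value is at least $\sum_e y_e$, so $x$ attains the primal minimum.

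There is no real obstacle here; the only thing to be careful about is making sure the double-counting step $\sum_v y(\delta(v)) = 2\sum_e y_e$ is stated explicitly, and that the direction of the inequality $y(\delta(v)) \leq w_v$ (from dual feasibility) combined with the positivity of $x_v$ is correctly invoked for complementary slackness in the forward direction.
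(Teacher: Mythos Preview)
Your proof is correct and follows essentially the same complementary-slackness/LP-duality approach as the paper. The only cosmetic difference is in the reverse direction: where the paper verifies that the pair $(x,y)$ satisfies both complementary slackness conditions and appeals to their sufficiency for optimality, you instead compute directly that the primal and dual objective values coincide and invoke weak duality---these are of course equivalent.
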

\begin{proof}
By complementary slackness, a feasible solution $x \in \mathbb{R}^V_{+}$ to the primal $P(\mathcal{G})$ and a feasible solution $y \in \mathbb{R}^E_{+}$ to the dual $D(\G)$ are optimal if and only if:
\[x_v > 0 \implies y(\delta(v)) = w_v \quad \forall v \in V\]
and 
\[y_e > 0 \implies x_u + x_v = 1 \quad \forall e = (u,v) \in E.\]
If $x = (\frac{1}{2}, \dots, \frac{1}{2})$ is an optimal solution, then, by strong duality, there exists an optimal dual solution $y$ and this solution needs to satisfy $y(\delta(v)) = w_v$ for every $v \in V$. Conversely, if there exists a dual solution $y$ satisfying $y(\delta(v)) = w_v$ for every $v \in V$, then the pair $x = (\frac{1}{2}, \dots, \frac{1}{2})$ and $y$ satisfy both the conditions of complementary slackness, implying that $x$ is optimal.
\end{proof}

Such instances have been called \emph{edge-induced} in \cite{conforti2020stable, fiorini2022integer}, in the sense that the dual values on the edges are free parameters, and the weights on the nodes are determined once the dual values are fixed. Such instances also satisfy:
\[w(V) = \sum_{v \in V}w_v = \sum_{v \in V}y(\delta(v)) = \sum_{v \in V} \sum_{e \in E}y_e \: 1_{\{e \in \delta(v)\}} = \sum_{e \in E}y_e \sum_{v \in V} 1_{\{e \in \delta(v)\}} = 2 \; y(E).\]
Observe that the approximation ratio of a feasible solution $U \subset V$ is defined as $w(U)/w(OPT(\G))$ and is invariant under scaling of the weights. We thus make a normalization ensuring that the optimal LP solution has objective value one, i.e., $w(V)/2 = y(E) = 1$, to get the following weight space polytope:

\[\Qw := \big\{w \in \mathbb{R}_{+}^V \mid \exists y \in [0,1]^E \text{ such that } y(E) = 1 \text{ and } w_v = y(\delta(v)) \quad \forall v \in V\big\}.\]
It turns out that the vertices of the polytope $Q^W$ have a one-to-one correspondance with the edges of the graph, a proof of which is shown in the Appendix. We end this section by showing that this normalization of the weight space allows us to get a convenient lower bound on $w(OPT(\G))$.

\begin{lemma}
\label{lemma_lower_bound_opt}
Let $\G = (V,E)$ be a graph. For any $w \in Q^W$,
$w(OPT(\G)) \geq 1$.
\end{lemma}
\begin{proof}
Since $w \in Q^W$, we know that the fully half-integral solution is an optimal linear programming solution, showing
$1 = w(V) /2 \leq w(OPT(\G))$,
by feasibility of $OPT(\G)$.
\end{proof}

\section{Round and Bipartize}
\label{sec:round-and-bipartize}

\subsection{Algorithm}
This section is devoted to the analysis of the approximation ratio of the algorithm and is the main contribution of the paper. We assume that we are given as input a pair $(\G,S)$ consisting of a weighted graph and an odd cycle transversal $S \subset V$. By Corollary \ref{lemma_rounding_type_i}, we may assume that the weight function satisfies $w \in Q^W$. By the previous section, there are dual edge weights $y_e \geq 0$ such that $w_v = y(\delta(v))$ for every $v\in V$ and which satisfy $\sum_{e \in E}y_e = 1$.

The algorithm is now very simple. First, take the vertices in $S \subset V$ to the cover and remove them from the graph. Then solve the integral linear program $P(\G \setminus S)$ and take the vertices having LP value one to the cover. The \emph{approximation ratio}, given a weight function $w \in Q^W$, is thus defined as
\begin{equation}
\label{eq_approximation_ratio}
R(w) := \frac{w(S) + w(OPT(\G \setminus S))}{w(OPT(\G))}.
\end{equation}
For simplicity of notation, we omit the dependence on $w$ of $OPT(\G)$ and $OPT(\G \setminus S)$. As a reminder, the bipartite graph $\G \setminus S$ is denoted by $\G' = (V', E')$. The vertex contracted graph $\G / S$ is denoted by $\tilde{\G} = (\tilde{V}, \tilde{E})$ and the contracted node is denoted by $v_S$.

\subsection{Stable Set to Bipartite}
We assume in this section that $S$ is a stable set. We will then generalize the results obtained here in a natural way to the most general setting of an arbitrary set $S$.
We now state our main theorem of this section.

\begin{theorem}
\label{thm_bound_shortest_odd_cycle}
Let $(\G,S)$ be the input to the rounding algorithm, with $S$ being a stable set. For any $w \in \Qw$, the approximation ratio satisfies
\[R(w) \leq 1 + \frac{1}{\rho}\]
where $2 \rho - 1$ is the odd girth of the contracted graph $\mathcal{\tilde{G}}$ and satisfies $\rho \in [2, \infty]$. Moreover, this bound is tight and is attained for a class of weight functions $\mathcal{W} \subset Q^W$.
\end{theorem}
\begin{remark}
We define the odd girth of a bipartite graph as being $\infty$.
\end{remark}
\begin{definition}
Let $(\G, S)$ be a pair consisting of a graph with an odd cycle transversal $S$. For a feasible vertex cover $U \subset V \setminus S$ of the bipartite graph $\G' = \G \setminus S$, we define
\[E_U := \big\{(u,v) \in E \; \big| \; u \in U,\ v \in U \text{ or } \; u \in U,\ v \in S\big\}. \]
In words, these are the edges with either both endpoints in the cover $U$, or with one endpoint in $U$ and one in $S$.
\end{definition}

\begin{definition}
Let $(\G, S)$ be a pair consisting of a graph with an odd cycle transversal $S$. Feasible vertex covers $U_1, \dots, U_k$ of the bipartite graph $\G' = \G \setminus S$ are defined to be \emph{pairwise edge-separate} if the edge sets $\{E_{U_1}, \dots, E_{U_k}\}$ are pairwise disjoint.
\end{definition}

\begin{remark}
We will often say that covers are pairwise edge-separate for the pair $(\G, S)$. It is however worth emphasizing that these covers are defined on the bipartite graph $\G' = \G \setminus S$.
\end{remark}

This definition turns out to be the key concept for us in order to prove improved bounds on the approximation ratio of the algorithm, as shown by the next lemma.

\begin{lemma}
\label{thm_bound_using_pecovers}
Let $(\G,S)$ be the input to the rounding algorithm, with $S$ being a stable set. If there exists $k$ pairwise edge-separate feasible vertex covers of the bipartite graph $\G' = \G \setminus S$, then, for every $w \in Q^W$, the approximation ratio of the algorithm satisfies
\[R(w) \leq 1 + \frac{1}{k}.\]
\end{lemma}
\begin{proof}
Let $w \in \Qw$ and let $y \in \mathbb{R}^E_+$ be the corresponding dual solution satisfying $w_v = y(\delta(v))$ and $y(E) = 1$. We denote by $\{U_1, \dots, U_k\}$ the pairwise edge-separate covers of $\G' = (V', E')$. We can now write down the weights of $S$ and every feasible cover $U_i$ with the help of the dual variables:
\begin{align*}
w(S) &= \sum_{v \in S} w_v = \sum_{v \in S} y(\delta(v)) = y(\delta(S)) \\
w(U_i) &= \sum_{v \in U_i} w_v = \sum_{v \in U_i} y(\delta(v)) = y(E') + y(E_{U_i}) \qquad \forall{i \in [k]}
\end{align*}
The first equality holds because $S$ is a stable set and thus only has edges crossing the set. The second equality holds because every $U_i$ counts the dual value $y_e$ of every $e \in E'$ at least once, by feasibility of the cover, and thus giving a contribution of $y(E')$. The edges in $E_{U_i}$ then give an additional contribution of $y(E_{U_i})$.

By Lemma \ref{lemma_lower_bound_opt}, the approximation ratio satisfies:
\begin{align}
\label{eq_upper_bound_OPT}
R(w) 
&= \frac{w(S) + w(OPT(\G \setminus S))}{w(OPT(\G))}
 \leq w(S) + w(OPT(\G')) 
 \leq w(S) + \min_{i \in [k]} w(U_i) \nonumber \\
& = y(\delta(S)) + y(E') +   \min_{i \in [k]} y(E_{U_i}) 
= 1 + \min_{i \in [k]} y(E_{U_i}) \leq 1 + \frac{1}{k}. \nonumber
\end{align}
The last equality follows from the fact that $E = E' \cup \delta(S)$ and $y(E) = 1$. The last inequality follows from the fact that the edge sets $\{E_{U_i}\}_{i \in [k]}$ are pairwise disjoint and have a dual sum of at most one, since the total sum of the edges of the graph is $y(E) = 1$. This minimum can thus be upper bounded by $1/k$. 
\end{proof}

In order to prove the upper bound in Theorem \ref{thm_bound_shortest_odd_cycle}, we thus need to construct $\rho$ pairwise edge-separate covers of $\G' = \G \setminus S$. The key for being able to do that is to analyze the structure of the contracted graph $\tilde{\G} = \G /S$, where $S$ is contracted into a single node $v_S$.

\begin{lemma}
\label{thm_separate_covers}
Let $(\G, S)$ be a graph with an odd cycle transversal $S$. If the contracted graph $\tilde{\G}$ contains an odd cycle, then there exists $\rho$ edge-separate feasible covers for the pair $(\tilde{\G}, v_S)$, where $2 \rho - 1$ is the odd girth of $\tilde{\G}$.
\end{lemma}

\begin{proof}
We now dive deeper into the structure of the bipartite graph $\G \setminus S = \tilde{\G} \setminus v_S$. By assumption, this graph admits a bipartition $A \cup B$ of the vertices. Let us assume that it has $k$ connected components $A_1 \cup B_1, \dots, A_k \cup B_k$, all of which are bipartite as well, where $A = \bigcup_i A_i$ and $B = \bigcup_i B_i$. We now fix an arbitrary such component $A_j \cup B_j$.
\begin{itemize}
\item If $v_S$ has an incident edge to both $A_j$ and $B_j$, then this component contains (if including $v_S$) an odd cycle of $\tilde{\G}$. This holds since any path between a node in $A_j$ and a node in $B_j$ has odd length.
\item If $v_S$ has incident edges with only one side, we assume without loss of generality that this side is $A_j$. One could simply switch both sides in the other case while still keeping a valid bipartition of the graph $\tilde{\G} \setminus v_S$.
\item If $v_S$ does not have incident edges with either of the two sides, then $A_j \cup B_j$ is a connected component of $\tilde{\G}$. We call such components \emph{dummy} components and denote by $A_d \cup B_d$ the bipartite graph formed by taking the union of all the dummy components.
\end{itemize}

We denote $N_A(v_S) = N(v_S) \cap A$ and $N_B(v_S) = N(v_S) \cap B$. We now split the graph into layers, where each layer corresponds to the nodes at the same shortest path distance from $N_A(v_S)$. More precisely, we define
\begin{equation}
\label{eq_layer_decomposition}
\mathcal{L}_i := \Big\{v \in A \cup B \mid d(N_A(v_S), v) = i\Big\} \quad \text{for } i \in \{0,\dots,q\}
\end{equation}
where $d(N_A(v_S), v)$ represents the unweighted shortest path distance between $v$ and a vertex in $N_A(v_S)$. The parameter $q$ is defined to be the maximal finite distance from $N_A(v_S)$ in the graph $\tilde{\G}$. An important observation is the fact that these layers are alternatingly included in one side of the bipartition, see Figure \ref{fig_layers} for an illustration of the construction. 

\begin{figure}
\center
\includegraphics[width = 0.7\textwidth]{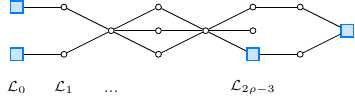}
\caption{The layers of a bipartite graph $\tilde{\G} \setminus v_S = (A \cup B, E')$ with $\rho = 4$. The blue square vertices correspond to $N(v_S)$, where the two left ones are $\mathcal{L}_0 = N_A(v_S)$ and the two right ones are $N_B(v_S)$.}
\label{fig_layers}
\end{figure}

If the graph $\tilde{\G}$ is not connected, note that $d(N_A(v_S),v) = \infty$ for the vertices $v$ lying in dummy components. In order to add the dummy components to the layers and keep alternation between the two sides of the bipartition, we define the last two layers to either be $\{\mathcal{L}_{q+1} := A_d, \: \mathcal{L}_{q+2}:=B_d\}$ or $\{\mathcal{L}_{q+1} := B_d, \: \mathcal{L}_{q+2}:=A_d\}$, depending on which side of the bipartition the last connected layer $\mathcal{L}_{q}$ lies. We now have that $\mathcal{L}_i \subset A$ if $i$ is even, and $\mathcal{L}_i \subset B$ if $i$ is odd. In fact,
\[A = \bigcup_{i = 0}^{\lfloor l/2 \rfloor} \mathcal{L}_{2i} \quad \text{and} \quad B = \bigcup_{i = 1}^{\lceil l/2 \rceil} \mathcal{L}_{2i-1},\]
where the parameter $l \in \mathbb{N}$ represents the index of the last layer: if $\tilde{\G}$ is connected, then $l = q$, otherwise $l = q+2$. Notice also that $\mathcal{L}_0 = N_A(v_S)$. However, $N_B(v_S)$ may now have several different vertices in different layers, see Figure \ref{fig_layers}.

Let $C \subset V$ be an arbitrary odd cycle of $\tilde{\G}$. Notice that this cycle contains $v_S$, a vertex from $N_A(v_S)$ and a vertex from $N_B(v_S)$, since $\tilde{\G} \setminus v_S$ is bipartite and therefore does not contain an odd cycle. Any odd cycle $C$ in $\tilde{\G}$ thus corresponds to an odd path between a vertex in $N_A(v_S) = \mathcal{L}_0$ and a vertex in $N_B(v_S)$. By the assumption that the shortest odd cycle length of $\mathcal{\tilde{G}}$ is $2 \rho -1$, the first layer having a non-empty intersection with $N_B(v_S)$ is $\mathcal{L}_{2\rho -3}$.
A shortest odd cycle of length $2 \rho -1$ therefore corresponds to an odd path of length $2\rho -3$ between $\mathcal{L}_0$ and a vertex in $\mathcal{L}_{2 \rho -3} \cap N_B(v_S)$, see Figure \ref{fig_layers} for an illustration. We now define edges connecting two consecutive layers $\mathcal{L}_i$ and $\mathcal{L}_{i+1}$ as follows:
\[E[\mathcal{L}_i, \mathcal{L}_{i+1}]:= \{(u, v) \in E' \mid u \in \mathcal{L}_i, v \in \mathcal{L}_{i+1}\} \qquad \forall i \in \{0,\dots, l-1\}.\]
We also denote by 
\[\delta_A(v_S) = \{(v_S, u) \in \tilde{E} \mid u \in A\}, \qquad \delta_B(v_S) = \{(v_S, u) \in \tilde{E} \mid u \in B\}\]
the incident edges to $v_S$ respectively connecting to $A$ and $B$.

We are now ready to construct our desired $\rho$ pairwise edge-separate covers of $\tilde{\G}\setminus v_S$, that we denote by $U_1, \dots, U_\rho$ and illustrated in Figure \ref{fig_feasible_opts_g_vp}. Firstly, notice that taking one side of the bipartition is a feasible vertex cover. We thus define $U_{\rho} = A$ and $U_{\rho - 1} = B$. Observe that we then have $E_{U_\rho} = \delta_A(v_S)$ and $E_{U_{\rho -1}} = \delta_B(v_S)$.
\begin{figure}[t]
\center
\includegraphics[width = 0.9\textwidth]{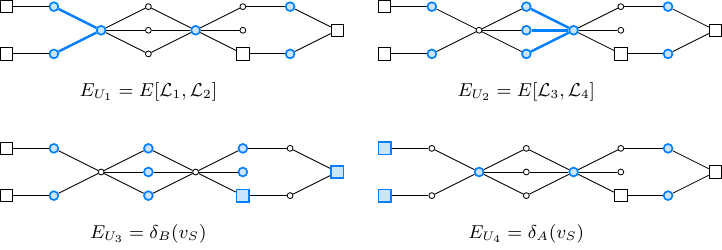}
\caption{The $\rho$ feasible covers of $\G'$ constructed in the proof of Lemma \ref{thm_separate_covers}.}
\label{fig_feasible_opts_g_vp}
\end{figure}
We now construct $\rho - 2$ additional covers with the help of the layers. If $\rho \neq 2$, fix a $j \in [\rho - 2]$, and start the cover $U_j$ by taking the two consecutive layers $\mathcal{L}_{2j-1}$ and $\mathcal{L}_{2j}$. Complete this cover by taking remaining layers alternatingly (hence always skipping one) until covering every edge of the graph. Notice that this cover has an empty intersection with $N(v_S)$. We then have that 
\[E_{U_\rho} = \delta_A(v_S), \qquad E_{U_{\rho -1}} = \delta_B(v_S), \qquad E_{U_j} = E[\mathcal{L}_{2j-1}, \mathcal{L}_{2j}] \quad \forall j \in [\rho -2],\]
which are all pairwise disjoint edge sets, finishing the proof.
\end{proof}

We now have all the tools to prove the upper bound of Theorem \ref{thm_bound_shortest_odd_cycle}.
\begin{proof}
Asume first that $\rho < \infty$, meaning that $\tilde{\G}$ contains an odd cycle. By Lemma \ref{thm_separate_covers}, there exists $\rho$ pairwise edge-separate covers for the pair $(\tilde{\G}, v_S)$. These covers are then still edge-separate for the pair $(\G, S)$, since the bipartite graph is the same in both cases, i.e. $\G' = \tilde{\G}\setminus v_S = \G \setminus S$. This finishes the proof by Lemma \ref{thm_bound_using_pecovers}.

If $\rho = \infty$, then $\tilde{\G}$ is bipartite, with a bipartition $\tilde{A} \cup \tilde{B}$. Assume without loss of generality that $v^S \in \tilde{A}$. Note that $\tilde{E} = E' \cup \delta(v^S)$ and thus $1 = y(\tilde{E}) = y(E') + y(\delta(v^S))$. Any feasible cover of $\G' = \G \setminus S$ needs to count the dual value of every edge in $E'$ at least once. Taking the cover $\tilde{A} \setminus v^S$ counts every edge in $E'$ exactly once, showing that $w(OPT(\G\setminus S)) = y(E')$. Hence, $R(w) \leq w(S) + w(OPT(\G\setminus S)) = y(\delta(v_S)) + y(E') = 1$.
\end{proof}

We now show that this bound is tight and is attained for a class of weight functions $w \in \mathcal{W}$ for any such graph $\G$ and stable set $S$. 

For the case where $\rho = \infty$, it is clear that the approximation ratio always satisfies $R(w) \geq 1$, showing that the bound in Theorem \ref{thm_bound_shortest_odd_cycle} is tight for any $w \in Q^W$. 

We thus assume that $\rho < \infty$. Let $\mathcal{C}$ be all the shortest odd cycles (of length $2 \rho - 1$) of the graph $\tilde{\G}$, each of which is containing $v_S$. For every such cycle $C \in \mathcal{C}$, we define the following dual function on the edges $y^C: \tilde{E} \to \mathbb{R}_{+}$: set both dual edges incident to $v_S$ to $1/\rho$ and then alternatingly set the dual edges to 0 and $1/\rho$ along the odd cycle. For any edge outside of $C$, set its dual value to $0$. Such a solution clearly satisfies $y^C(\tilde{E}) = 1$.
We now take the convex hull of all these functions:
\begin{align*}
\mathcal{Y} &:= \Big\{y: \tilde{E} \to \mathbb{R}_{+} \mid y = \sum_{C \in \mathcal{C}} \lambda^C y^C, \quad \sum_{C \in \mathcal{C}} \lambda^C = 1, \quad  \lambda^C \geq 0 \quad \forall C \in \mathcal{C}\Big\}.
\end{align*}
Because of the one-to-one correspondence between the edge sets $\tilde{E}$ and $E$, due to the fact that $S$ is a stable set, we can naturally define a weight function on the original vertex set once we fix a $y \in \mathcal{Y}$ by setting $w_v := y(\delta(v))$ for every $v \in V$.
We define the space of all such weight functions as
\[\mathcal{W} := \left\{w: V \to \mathbb{R}_{+} \mid w_v = y(\delta(v)) \quad \forall v \in V, \quad y \in \mathcal{Y} \right\}.\]

\begin{figure}[t]
\center
\includegraphics[width = 0.6\textwidth]{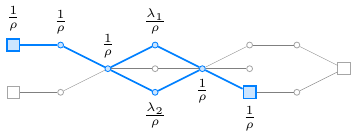}
\caption{An example of a weight function $w \in \mathcal{W}$ obtained by a convex combination of two basic weight functions of shortest odd cycles. }
\label{fig_conv_comb_cycles}
\end{figure}

\begin{theorem}
\label{thm_tightness_bound}
For any weight function $w\in \mathcal{W}$, the approximation ratio satisfies
\[R(w) = 1 + \frac{1}{\rho} \]
where $2 \rho - 1$ is the odd girth of $\mathcal{\tilde{G}}$ and satisfies $\rho \in [2,\infty)$.
\end{theorem}
\begin{proof}
Let $\mathcal{C}$ be the set of all the shortest odd cycles (of length $2 \rho -1$) of the graph $\tilde{\G}$ and let $w \in \mathcal{W}$ with the corresponding $y = \sum_{C \in \mathcal{C}} \lambda^C y^C$. Notice that, for any subset of vertices $U \subset V'$ of the bipartite graph $\G'$, we can count its weight as
\begin{align}
\label{eq_count_weight_through_cycles}
w(U) &= \sum_{v \in U}w_v = \sum_{v \in U}y(\delta(v)) = \sum_{v \in U} \sum_{C \in \mathcal{C}}  \lambda^C y^C(\delta(v)) \nonumber \\ & =  \sum_{v \in U} \sum_{C \in \mathcal{C}}\frac{\lambda^C}{\rho} 1_{\{v \in C\}} = \frac{1}{\rho}\sum_{C \in \mathcal{C}} \lambda^C \sum_{v \in U} 1_{\{v \in C\}} = \frac{1}{\rho}\sum_{C \in \mathcal{C}}  \lambda^C |U \cap C|.
\end{align}
The end of the proof now heavily uses the decomposition of $\tilde{\G}$ into layers described in \eqref{eq_layer_decomposition}. Notice that every odd cycle $C \in \mathcal{C}$ intersects each layer $\mathcal{L}_i$ for $i \in \{0,\dots, 2 \rho - 3\}$ exactly once. Therefore, by \eqref{eq_count_weight_through_cycles},
$w(\mathcal{L}_i) = 1/\rho$ for every $i \in \{0,\dots, 2 \rho - 3\}.$
We now claim that 
\[w(OPT(\G)) = 1, \quad w(OPT(\G')) = \frac{\rho -1}{\rho} \quad \text{and} \quad  w(S) = \frac{2}{\rho}.\]
The fact that $w(OPT(\G)) \geq 1$ follows from Lemma \ref{lemma_lower_bound_opt}. For the reverse inequality, notice that it is possible to take a feasible cover by taking exactly $\rho$ layers in addition to the zero weight vertices, for instance $\mathcal{L}_0 \cup \mathcal{L}_2 \cup \mathcal{L}_3 \cup \mathcal{L}_5 \dots \cup \mathcal{L}_{2 \rho - 3}$, showing $w(OPT(\G)) \leq 1$. 

Observe now that $w(OPT(\G')) = w(OPT(\G \setminus S)) = w(OPT(\tilde{\G}\setminus v_S))$. After removal of $v_S$, every cycle $C \in \mathcal{C}$ becomes a path of length $2 \rho - 3$ (and thus consisting of  $2 \rho -2$ vertices), with one vertex in each layer $\mathcal{L}_i$ for $i \in \{0, \dots, 2 \rho - 3\}$. By feasibility, $OPT(\G')$ has to contain at least $\rho - 1$ vertices for every such path. Using \eqref{eq_count_weight_through_cycles}, we infer $w(OPT(\G')) \geq (\rho - 1)/\rho$. For the reverse inequality, taking $\rho -1$ layers alternatively, such as $\mathcal{L}_0 \cup \mathcal{L}_2 \cup \mathcal{L}_4 \dots \cup \mathcal{L}_{2 \rho - 4}$, as well as the zero weight vertices, builds a feasible cover of weight exactly $(\rho - 1)/\rho$. 

Finally, notice that
$w(S) = w(v_S) = 2/\rho$
because every $C \in \mathcal{C}$ contains $v_S$. 
By combining the three equalities, we get the desired result
\[R(w) = \frac{w(S) + w(OPT(\G'))}{w(OPT(\G))} = 1 + \frac{1}{\rho}.\]
\end{proof}

\subsection{Arbitrary Set to Bipartite}
\label{section_arb_set_to_bipartite}
We now consider the setting where $S$ is now an arbitrary set. Our guarantee on the approximation ratio will now also depend on the total sum of the dual variables on the edges inside of the set $S$. We denote this sum by
\[\alpha := y(E[S]) \in [0,1].\]

\begin{figure}
\center
\includegraphics[width = 0.7\textwidth]{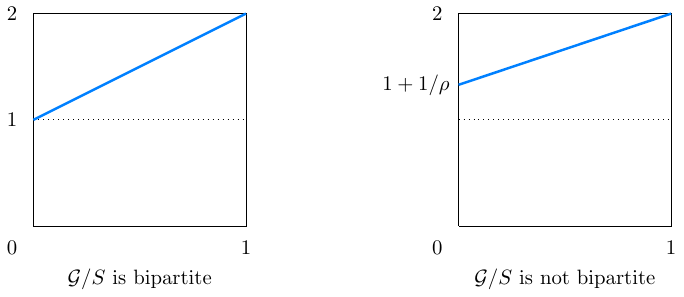}
\caption{The plot of the approximation ratio $R(w)$ with respect to the parameter $\alpha \in [0,1]$.}
\label{fig_approx_ratio}
\end{figure}

\begin{theorem}
\label{thm_bound_ratio_general}
For any $w \in \Qw$, the approximation ratio satisfies 
\[R(w) \leq \left(1+\frac{1}{\rho}\right) (1 - \alpha) + 2 \alpha \qquad \text{with } \alpha \in [0,1] \text{ and } \rho \in [2,\infty]\]
where $2\rho -1$ denotes the odd girth of the contracted graph $\tilde{\G}$. Moreover, these bounds are tight and are attained for any $\alpha \in [0,1]$ and any $\rho \in [2,\infty]$.
\end{theorem}
\begin{proof}
We only prove here the upper bound with $\rho < \infty$ and leave the remaining statements to the Appendix. The proof essentially follows the same arguments as the one of Lemma \ref{thm_bound_using_pecovers} with the $\alpha$ parameter incorporated, and we thus only highlight the main differences. We decompose the weight of the set $S$ with respect to the dual variables. The edges in $E[S]$ are counted twice, whereas the edges in $\delta(S)$ are counted once:
\begin{equation}
\label{eq_arb_set_weight_S}
w(S) = 2\alpha + y(\delta(S)).
\end{equation}
Consider the contracted graph $\tilde{\G} = \G / S = (\tilde{V}, \tilde{E})$ and denote by $v^S$ the contracted node. The edge set of this graph is now
$\tilde{E} = \delta(S) \cup E'$
, since the edges in $E[S]$ have been collapsed. By Lemma \ref{thm_separate_covers}, we can construct $\rho$ edge-separate covers $U_1, \dots, U_\rho \subset \tilde{V} \setminus v^S$ for the pair $(\tilde{\G},v_S)$. These covers are still edge-separate for the pair $(\G, S)$, implying
\begin{equation}
\label{eq_arb_set_opt_G_setminus_S}
w(OPT(\G \setminus S)) \leq \min_{i \in [\rho]} \; w(U_i) = y(E') + \min_{i \in [\rho]} y(E_{U_i}) \leq y(E') + \frac{1-\alpha}{\rho}.
\end{equation}
The first inequality holds since every $U_i$ is a feasible cover of $\G \setminus S$. The second equality holds by counting the weight of a cover $U_i$ in terms of the dual edges. The last inequality holds because the edge sets $\{E_{U_i}\}_{i \in [\rho]}$ are pairwise disjoint, and their total dual sum is at most $1 - \alpha$.
Combining Lemma \ref{lemma_lower_bound_opt}, \eqref{eq_arb_set_weight_S} and \eqref{eq_arb_set_opt_G_setminus_S},
\begin{align*}
R(w) 
\leq 2 \alpha + y(\delta(S)) + y(E') + \frac{1-\alpha}{\rho} 
= 1 + \alpha + \frac{1-\alpha}{\rho} = \left(1+\frac{1}{\rho}\right) (1 - \alpha) + 2 \alpha.
\end{align*}
\end{proof} 

\section{Algorithmic applications}
\label{section_alg_consequences}
We focus in this section on efficient ways to find odd cycle transversals with a low value for the $\alpha$ parameter. In fact, once such a set $S$ is found, there can also be freedom in the choice of the dual solution in order to optimize the $\alpha$ parameter. This motivates the following definition.\

\begin{definition}
Let $(\G, S, y, w)$ be a graph with an odd cycle transversal  $S \subset V$, weights $w \in Q^W$ and a dual solution $y \in \mathbb{R}^E_+$. A tuple $(\G', S', y', w')$ is \emph{approximation preserving} if
\[w(S) + w(OPT(\G \setminus S)) \leq w'(S') + w'(OPT(\G' \setminus S')).\]
Moreover, we say that $\alpha \in [0,1]$ is \emph{valid} for the pair $(\G, S)$ if there exists an approximation preserving $(\G', S', y', w')$ such that $\alpha = y'(E[S']).$
\end{definition}
Finding a valid $\alpha \in [0,1]$ would directly allow us to use it in the bound of Theorem \ref{thm_bound_ratio_general}, where the $\rho$ parameter would correspond to the one of the approximation preserving graph. We present here an application if a coloring of a graph can be found efficiently.

\begin{theorem} 
\label{thm_coloring_application}
Let $\G = (V,E)$ be a graph with weights $w \in Q^W$ that can be $k$-colored in polynomial time for $k \geq 4$. There exists an efficiently findable set $S \subset V$ bipartizing the graph and a valid $\alpha$ such that 
$ \alpha  \leq 1 - 4/k$, leading to an approximation ratio of
\[R(w) \leq 2 - \frac{4}{k}\left(1 - \frac{1}{\rho} \right)\]
\end{theorem}
\label{thm_alpha_bound}
\begin{proof}
Let us denote by $V_1, \dots, V_k$ the $k$ independent sets defining the color classes of the graph $\G$. We assume without loss of generality that they are ordered by weight $w(V_1) \leq w(V_2) \dots \leq w(V_k)$. Since $w(V) = 2$, the two color classes with the largest weights satisfy $w(V_{k-1}) + w(V_k) \geq 4/k$. We define the bipartizing set to be the remaining color classes: $S := V_1 \cup \dots \cup V_{k-2}$. We denote by $y \in \mathbb{R}^E_+$ the dual solution satisfying complementary slackness and $y(E) = 1$.

We now define an approximation preserving $(\G', S', y', w')$ in the following way. Let $\G' = K_k$ be the complete graph on $k$ vertices, denoted by $\{v_1, \dots, v_k\}$. The weights are defined to be \[w'(v_i) := w(V_i) \quad \text{ and } \quad y'(v_i, v_j) := y(E[V_i, V_j])\] for every $i,j \in [k]$. These clearly satisfy the complementary slackness condition $y'(\delta(v_i)) = w'(v_i)$ for every $i \in [k]$. The bipartizing set is defined to be $S' := \{v_1, \dots, v_{k-2}\}$. This tuple is approximation preserving since $w(S) = w'(S')$ and $w(OPT(\G \setminus S)) \leq w'(OPT(\G'\setminus S'))$. In order to prove the theorem, we still need to tweak the dual solution $y'$ to ensure $\alpha := y'(E[S']) \leq 1 - 4/k$. Observe that $w'(v_{k-1}) + w'(v_k) \geq 4/k$.

\begin{enumerate}
\item If $y'(v_{k-1},v_k) = 0$, then the result follows since in that case $y'(\delta(S')) \geq 4/k$ and thus $y'(E[S']) \leq 1 - 4/k$. 
\item If $y'(E[S']) = 0$, then the result trivially follows as well.
\end{enumerate}

Suppose thus that $y'(E[S']) > 0$ and $y'(v_{k-1}, v_k) > 0$. Pick an arbitrary edge $(v_i, v_j) \in E[S']$ satisfying $y'(v_i, v_j) > 0$ and consider the 4-cycle $(v_i, v_j, v_{k-1}, v_k)$. Notice that alternatively increasing and decreasing the dual values on the edges of this cycle by a small amount $\epsilon > 0$ gives another feasible dual solution satisfying the complementary slackness condition. More formally, we set $\epsilon := \min\{y'(v_i, v_j), y'(v_{k-1}, v_k)\}$, decrease $y'(v_i, v_j)$ and $y'(v_{k-1}, v_k)$ by $\epsilon$, while increasing $y'(v_j, v_{k-1})$ and $y'(v_k, v_i)$ by the same amount. Observe that this leads to either $(v_i, v_j)$ or $(v_{k-1}, v_k)$ dropping to dual value zero. We can repeat this procedure until either $y'(E[S']) = 0$ or $y'(v_{k-1}, v_k) = 0$, finishing the proof of the theorem.
\end{proof}

We now claim that this result is optimal in the following sense. Consider an $n$-vertex graph. It is known that the integrality gap of the standard linear programming relaxation for vertex cover is upper bounded by $2 - 2/n$, a bound which is attained on the complete graph. This implies that any approximation algorithm lower bounding $w(OPT)$ by comparing it to the optimal LP solution, as we do in Lemma \ref{lemma_lower_bound_opt}, cannot do better than $2 - 2/n$ in the worst case. Setting $\rho = 2$ in Theorem \ref{thm_coloring_application}, which corresponds to the worst case since $\rho \in [2,\infty]$, recovers this bound and a result of Hochbaum in \cite{hochbaum1983efficient}.

We now make another connection with the \Call{MinUncut}{} problem, which is defined on a graph $\G = (V, E)$ with weights $y_e$ for every edge $e \in E$. The goal of the problem is to find a cut of the graph which minimizes the total weight of uncut edges. This problem is NP-hard and admits a $O(\sqrt{\log(n)})$ approximation \cite{Agarwal2005OlogNA}. We call a \Call{MinUncut}{} instance \emph{light} if its optimal solution is bounded above by $y(E)/\Omega(\log(n))$.
\begin{theorem}
For any light \Call{MinUncut}{} instance, combining the $O(\sqrt{\log(n)})$ approximation in \cite{Agarwal2005OlogNA} with Algorithm \ref{round_bipartize} outputs a vertex cover with approximation ratio at most
\[R(w) \leq 1 + \frac{1}{\rho} + o(1)\]
where $2 \rho -1$ is the odd girth of the contracted graph $\tilde{\G}$ and satisfies $\rho \in [2,\infty]$.
\end{theorem}
\begin{proof}
The key observation is that any feasible solution to the \Call{MinUncut}{} problem of value $\alpha$ gives an odd cycle transversal $S$ satisfying $y(E[S]) = \alpha$. Indeed, let $(C, V \setminus C)$ be a feasible solution, where the total weight of uncut edges is $\alpha$. Observe that removing all the uncut edges makes $C$ and $V \setminus C$ become stable sets, implying that the remaining graph is bipartite. We then define $S \subset V$ to be all the nodes incident to the uncut edges. Since removing all the nodes in $S$ also removes all the uncut edges, the remaining graph is bipartite. Moreover, $y(E[S]) = \alpha$. 

By the lightness assumption and the weight space normalization, the optimal solution $\alpha^*$ satisfies $\alpha^* \leq y(E) / \Omega(\log(n)) = 1 / \Omega(\log(n))$. Running the $O(\sqrt{\log(n)})$ approximation algorithm then outputs a solution with value $\alpha \leq O(\sqrt{\log(n)}) \; \alpha^* \leq 1 / \Omega(\sqrt{\log(n)}) = o(1)$. This therefore leads to an approximation guarantee of 
\[R(w) = \left(1+\frac{1}{\rho}\right) (1 - \alpha) + 2 \alpha = 1 + \frac{1}{\rho} + \alpha \left(1 - \frac{1}{\rho}\right) \leq 1 + \frac{1}{\rho} + o(1).\]
\end{proof}

\section{Integrality Gap and Fractional Chromatic Number}
\label{section_integrality_gap}

We focus in this section on proving tight bounds for the integrality gap of $3$-colorable graphs. A key result that we use in this section is given by Singh in \cite{singh2019integrality}, which relates the integrality gap with the fractional chromatic number of a graph. The latter is denoted as $\chi^f(\G)$ and is defined as the optimal solution of the following primal-dual linear programming pair. We denote by $\mathcal{I} \subset 2^V$ the set of all independent sets of the graph $\G$. Solving these linear programs is however NP-hard because of the possible exponential number of independent sets.

\begin{minipage}[t]{0.5\textwidth}
\begin{align*}
    \min \sum_{I \in \mathcal{I}} y_I  & \\
    \sum_{I \in \mathcal{I}, v \in I} y_I &\geq 1 \quad \forall v \in V\\
    y_I &\geq 0 \quad \forall I\in \mathcal{I}
\end{align*}
\end{minipage}
\begin{minipage}[t]{0.5\textwidth}
\begin{align*}
    \max \sum_{v \in V} z_v  & \\
    \sum_{v \in I} z_v &\leq 1 \quad \forall I \in \mathcal{I}\\
    z_v &\geq 0 \quad \forall v\in V
\end{align*}
\end{minipage}
\vspace{0.5cm}

\noindent Note that $\chi^f(\G) = 2$ if and only if $\G$ is bipartite.

\begin{theorem}[Singh, \cite{singh2019integrality}]
\label{thm_singh_integrality_gap}
Let $\G = (V,E)$ be a graph. The integrality gap of the vertex cover linear programming relaxation $P(\G)$ satisfies \[IG(\G) = 2 - \frac{2}{\chi^f(\G)}.\]
\end{theorem}

We first focus on graphs with the existence of a single vertex whose removal produces a bipartite graph. The following theorem generalizes the result given for the cycle graph in \cite{arora2002proving, scheinerman2011fractional} and turns out to be the same formula as for series-parallel graphs \cite{goddard2016fractional}.
\begin{theorem}
\label{thm_fractional_chrom_number}
Let $\G = (V,E)$ be a non-bipartite graph and $v_p \in V$ such that $\G \setminus v_p = (A \cup B, E')$ is bipartite. Then,
\[\chi^f(\G) = 2 + \frac{1}{\rho - 1},\]
where $2 \rho - 1$ is the odd girth of $\G$.
\end{theorem}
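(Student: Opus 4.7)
The plan is to prove the equality by establishing matching lower and upper bounds on $\chi^f(\G)$.

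For the lower bound, I observe that $\G$ contains the cycle $C_{2\rho-1}$ as a (not necessarily induced) subgraph by the odd girth assumption. Since the fractional chromatic number is monotone under subgraph inclusion, and $\chi^f(C_{2\rho-1}) = (2\rho-1)/(\rho-1) = 2 + 1/(\rho-1)$ by a classical result \cite{scheinerman2011fractional}, this immediately gives
\[\chi^f(\G) \geq \chi^f(C_{2\rho-1}) = 2 + \frac{1}{\rho-1}.\]

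For the upper bound, the strategy is to construct a $(2\rho-1,\rho-1)$-coloring of $\G$, i.e., an assignment of a size-$(\rho-1)$ subset $f(v) \subseteq \{0, 1, \ldots, 2\rho-2\}$ to each vertex $v$ such that $f(u) \cap f(v) = \emptyset$ for every edge $(u, v)$; equivalently, a homomorphism $\G \to C_{2\rho-1}$. Using the layer decomposition $\mathcal{L}_0, \ldots, \mathcal{L}_l$ of $\G \setminus v_p$ from Section \ref{section_one_node_to_bipartite}, I plan to assign positions $p: V \to \mathbb{Z}_{2\rho-1}$ by $p(v_p) = 0$ and $p(v) \equiv (i+1)(\rho-1) \pmod{2\rho-1}$ for $v \in \mathcal{L}_i$, and define $f(v) = \{p(v), p(v)+1, \ldots, p(v)+\rho-2\} \pmod{2\rho-1}$. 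For all edges between consecutive layers in $\G \setminus v_p$, as well as for edges from $v_p$ to $\mathcal{L}_0 = N_A(v_p)$ and to any neighbor in $N_B(v_p) \cap \mathcal{L}_{2\rho-3}$, consecutive positions differ by exactly $\rho-1$, making the color sets disjoint.

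The main obstacle is to handle edges from $v_p$ to vertices in $N_B(v_p) \cap \mathcal{L}_i$ for odd $i > 2\rho-3$, for which the uniform layer-based position need not satisfy the disjointness condition with $f(v_p)$. To overcome this, I plan to exploit that every odd cycle of $\G$ passes through $v_p$ (since $\G \setminus v_p$ is bipartite) and has length at least $2\rho-1$, so each such ``deep'' $v_p$-neighbor lies on an odd cycle whose total length admits a consistent $\pm 1$-signing that is divisible by $2\rho-1$. Adjusting the positions along the excess portions of such longer odd cycles via a signed BFS from $v_p$ that locally reverses the shift direction should yield the desired homomorphism. As a cleaner alternative, by Theorem \ref{thm_singh_integrality_gap} the upper bound is equivalent to proving the matching integrality gap bound $IG(\G) \leq 2\rho/(2\rho-1)$, which may be more tractable via a refined rounding argument exploiting the structure of $\G$.
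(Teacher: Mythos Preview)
Your lower bound is correct and essentially identical to the paper's: the paper writes down the dual LP solution supported on a shortest odd cycle, which is exactly the certificate behind the monotonicity argument you quote.

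Your upper bound, however, has a genuine gap. By insisting on interval colour sets $f(v)=\{p(v),\dots,p(v)+\rho-2\}\subset\mathbb{Z}_{2\rho-1}$ you are attempting to prove the stronger statement that $\G$ admits a homomorphism to $C_{2\rho-1}$, i.e.\ that the \emph{circular} chromatic number equals $(2\rho-1)/(\rho-1)$. This is not what the theorem asks for, and it is precisely the extra rigidity of the interval structure that creates the obstacle you identify: for $v\in N_B(v_p)\cap\mathcal{L}_i$ with odd $i>2\rho-3$, the position $(i+1)(\rho-1)\bmod(2\rho-1)$ is in general neither $\rho-1$ nor $\rho$, so $f(v)$ and $f(v_p)$ intersect. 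Your proposed ``signed BFS'' fix is not a proof: the length of the odd cycle through such a vertex is $i+2$, which is $\geq 2\rho-1$ but need not be a multiple of $2\rho-1$, so no local $\pm 1$ resigning along a single cycle forces $p(v)-p(v_p)\in\{\rho-1,\rho\}$; and a global resigning must remain consistent across \emph{all} edges of the bipartite part simultaneously, which you have not addressed. The integrality-gap alternative you float is just a restatement of the same inequality via Theorem~\ref{thm_singh_integrality_gap} and does not make the construction easier.

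The paper avoids this difficulty entirely by working with the fractional chromatic number directly rather than the circular one. It builds $2\rho-1$ independent sets $I_1,\dots,I_{2\rho-1}$, each with weight $1/(\rho-1)$, as follows: on the first $2\rho-1$ layers $\{v_p\},\mathcal{L}_0,\dots,\mathcal{L}_{2\rho-3}$ it takes the $2\rho-1$ ``rotated'' selections $U_k$ (start at layer $k$, skip two, then alternate), and then it simply appends \emph{all} odd deep layers $R_1=\bigcup_{i>2\rho-3,\,i\text{ odd}}\mathcal{L}_i$ to those $U_k$ not containing $v_p$, and all even deep layers $R_2$ to those $U_k$ containing $v_p$. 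Because $v_p$ has no edge to any even layer other than $\mathcal{L}_0$, each $I_k$ is independent regardless of where the deep $B$-neighbours of $v_p$ sit. Every vertex lies in at least $\rho-1$ of the $I_k$, so this is a feasible primal LP solution of value $2+1/(\rho-1)$. The key point is that vertices in odd deep layers are allowed to lie in $\rho$ of the sets rather than exactly $\rho-1$; dropping your interval constraint is what makes the deep-neighbour issue disappear.
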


\begin{proof}[Proof of Theorem \ref{thm_fractional_chrom_number}]
We prove this theorem by constructing feasible primal and dual solutions of objective value $2 + 1/(\rho -1)$. By strong duality, these two solutions are then optimal for their respective linear programs, hence proving the theorem.

We first construct the dual solution. Let $\mathcal{C}$ be the set of all the shortest odd cycles of $\G$. For any such cycle $C \in \mathcal{C}$, define the dual solution $z^C \in \mathbb{R}^V$ by
\[z_v^C = \begin{cases} 1/(\rho -1) \quad &\text{if } v \in C \\ 0 \quad &\text{if } v \in V \setminus C \end{cases}\]
This solution is feasible since any independent set in an odd cycle of length $2\rho -1$ has size at most $\rho -1$. Indeed, fix an independent set $I \in \mathcal{I}$, then:
\[\sum_{v \in I} z_v^C = \sum_{v \in I \cap C}\;  \frac{1}{\rho -1} = \frac{|I \cap C|}{\rho -1} \leq 1.\]
Moreover, the objective value of this solution is:
\[\sum_{v \in V} z_v^C = \sum_{v \in C} \; \frac{1}{\rho -1} = \frac{2 \rho -1}{\rho -1} = 2 + \frac{1}{\rho -1}.\]

\begin{figure}
\centering
\includegraphics[width = 0.5\textwidth]{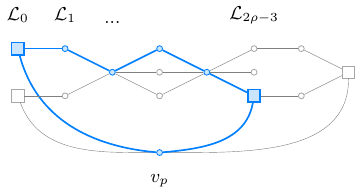}
\caption{An optimal dual solution constructed in the proof of Theorem \ref{thm_fractional_chrom_number}. Each node on a shortest odd cycle is assigned a fractional value of $1/(\rho -1)$.}
\label{fig_FCN_primal_dual_solutions}
\end{figure}

Let us now construct the primal solution. We will do so by constructing $2\rho -1$ independent sets $I_k \in \mathcal{I}$ and assigning to each of them a fractional value of $y(I_k) = 1/(\rho -1)$. All the other independent sets are assigned value zero. We split the bipartite graph $\G \setminus v_p$ into the layers
\[\mathcal{L}_i := \{v \in A \cup B \mid d(N_A(v_p), v) = i\} \quad \text{for } i \in \{0,\dots,l\}.\] as explained in \eqref{eq_layer_decomposition}. As a reminder, any shortest odd cycle corresponds to a path between $\mathcal{L}_0 = N_A(v_p)$ and $\mathcal{L}_{2 \rho -3} \cap N_B(v_p)$. The original vertex set $V$ is thus decomposed into $\{v_p\} \cup \mathcal{L}_0 \cup \dots \cup \mathcal{L}_l$, where each layer is an independent set and only has edges going out to $v_p$ or its two neighbouring layers. 

Let us first focus on the subgraph consisting of the vertices in $\{v_p\} \cup \bigcup_{i = 0}^{2 \rho -3} \mathcal{L}_i$, where any shortest odd cycle has exactly one vertex per layer (per abuse of notation, we say that $\{v_p\}$ is also a layer in this situation). For convenience of indexing, we rename these layers as $\mathcal{\tilde{L}}_1, \dots, \mathcal{\tilde{L}}_{2 \rho -1}$ where $\tilde{\mathcal{L}_1} = v_p$ and  $\tilde{\mathcal{L}}_i = \mathcal{L}_{i-2}$ for $i > 1$. We now create $2 \rho -1$ independent sets on this subgraph in the following way. The first independent set is defined as $U_1 = \mathcal{\tilde{L}}_1 \cup \mathcal{\tilde{L}}_4 \cup \mathcal{\tilde{L}}_6\ \dots \cup \mathcal{\tilde{L}}_{2 \rho -2}$, where we take the first layer $\mathcal{\tilde{L}}_1$, skip two before taking the next one and then continue by taking the remaining layers alternatingly (hence always skipping one), see Figure \ref{fig_FCN_primal_solution}. Note that the layer following $\mathcal{\tilde{L}}_{2 \rho -1}$ is assumed to be $\mathcal{\tilde{L}}_1$. This procedure generates in fact a distinct independent set by starting at $\mathcal{\tilde{L}}_k$ for any $k \in [2 \rho -1]$ and we denote the corresponding independent set by $U_k$. Notice that each layer is contained in exactly $\rho -1$ of the constructed independent sets $\{U_k \mid k \in [2 \rho -1]\}$.

\begin{figure}
\centering
\includegraphics[width = 0.8\textwidth]{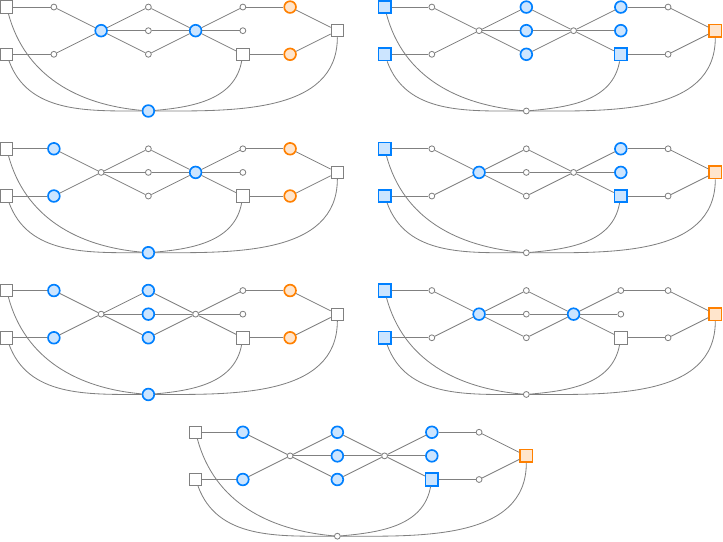}
\caption{The $2\rho -1$ independent sets $I_k$ constructed in the optimal primal solution. The blue nodes correspond to $\{U_k \mid k \in [2 \rho -1]\}$, whereas the orange nodes correspond to $\{R_1, R_2\}.$}
\label{fig_FCN_primal_solution}
\end{figure}

We now focus on the subgraph consisting of the vertices in $\bigcup_{i > 2 \rho -3} \mathcal{L}_i$. We can construct two different independent sets there by taking either the odd or even indexed layers, i.e. 
\[R_{1} := \bigcup_{i \text{ odd}, \; i > 2 \rho -3} \mathcal{L}_i \quad \text{and} \quad R_{2} := \bigcup_{i \text{ even}, \; i > 2 \rho -3} \mathcal{L}_i. \]
We now define our final $2 \rho -1$ independent sets on the full graph as:
\[I_k := \begin{cases} U_k \cup R_1 \quad \text{ if } v_p \notin U_k \\ U_k \cup R_2 \quad \text{ if } v_p \in U_k \end{cases} \qquad \forall k \in [2 \rho - 1]. \]
These are in fact independent sets: in the first case, the first layer in $R_1$ is $\mathcal{L}_{2 \rho -1}$ whereas the last layer in $U_k$ has index at most $2 \rho -3$, meaning that there are no two neighbouring layers. In the second case, since $v_p \in U_k$, we have that $\mathcal{L}_{2 \rho -3} \notin U_k$, by construction of $U_k$. The last layer in $U_k$ thus has index at most $2 \rho -4$, whereas the first layer in $R_2$ is $\mathcal{L}_{2 \rho -2}$, meaning again that there are no two neighbouring layers. In addition, there is no edge between $v_p$ and $R_2$, because the only even indexed layer having edges sent to $v_p$ is $\mathcal{L}_0 = N_A(v_p)$.

We now define our primal solution as
\[y(I_k) = \frac{1}{\rho -1} \qquad \forall k \in [2 \rho -1],\]
and $y(I) = 0$ for every other independent set $I \in \mathcal{I}$. We now show this is a feasible solution, i.e. that every vertex $v \in V$ belongs to at least $\rho -1$ independent sets in $\{I_k \mid k \in [2\rho -1]\}$. For $v \in \{v_p\} \cup \bigcup_{i = 0}^{2 \rho -3} \mathcal{L}_i$, such a vertex lies by construction in exactly $\rho -1$ independent sets $\{U_k \mid k \in [2 \rho -1]\}$, and thus also of $\{I_k \mid k \in [2 \rho -1]\}$. For $v \in \bigcup_{i > 2 \rho -3} \mathcal{L}_i$, if $v$ belongs to an even indexed layer, then it is contained in $\rho -1$ of the desired independent sets. If it belongs to an odd indexed layer, then it is contained in $\rho$ of them. Therefore,
\[\sum_{I \in \mathcal{I}, v \in I} y_I = \sum_{k = 1}^{2 \rho -1} y(I_k) \; 1_{\{v \in I_k\}} = \frac{1}{\rho - 1} \sum_{k = 1}^{2 \rho -1} 1_{\{v \in I_k\}} \geq 1.\]
The objective value of this primal solution is clearly $2 + 1/(\rho -1)$. We have constructed feasible primal and dual solutions with the same objective value. By strong duality, this finishes the proof of the theorem.

\end{proof}

We now consider the case where $\G = (V,E)$ is a graph with chromatic number $\chi(\G) = 3$.

\begin{theorem}
\label{thm_upper_bound_FCN}
Let $\G = (V,E)$ be a 3-colorable graph with color classes $V = V_1 \cup V_2 \cup V_3$. Then,
\[\chi^f(\G) \leq 2 + \min_{i \in \{1,2,3\}} \frac{1}{\rho_i - 1}\]
where $2 \rho_i -1$ is the odd girth of the contracted graph $\G/V_i$ for each $i \in \{1,2,3\}$. Moreover, equality holds if one color class only contains one vertex.
\end{theorem}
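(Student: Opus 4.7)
The plan is to reduce to Theorem~\ref{thm_fractional_chrom_number} by observing that each contracted graph $\G/V_i$ falls within its scope (it is one vertex away from being bipartite), and then to lift a fractional coloring of $\G/V_i$ back to $\G$ without increasing the objective. First I would verify the hypothesis: since $V_i$ is independent and $V \setminus V_i = V_j \cup V_k$ is the union of the two remaining color classes, removing $v^{V_i}$ from $\G/V_i$ yields a bipartite graph. When $\chi(\G) \leq 2$ the claim is immediate, so assume $\chi(\G) = 3$; then $\G/V_i$ must itself be non-bipartite, since a proper $2$-coloring of $\G/V_i$ in which $v^{V_i}$ receives some color $c$ would pull back to a proper $2$-coloring of $\G$ by assigning $c$ to all of $V_i$ and keeping each other vertex's color from $\G/V_i$. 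Hence Theorem~\ref{thm_fractional_chrom_number} gives $\chi^f(\G/V_i) = 2 + 1/(\rho_i - 1)$.

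Next I would prove $\chi^f(\G) \leq \chi^f(\G/V_i)$ by lifting. Given an optimal fractional coloring $y : \mathcal{I}(\G/V_i) \to \mathbb{R}_+$, for each independent set $J$ of $\G/V_i$ with $y_J > 0$ define $\phi(J) := J$ if $v^{V_i} \notin J$ and $\phi(J) := (J \setminus \{v^{V_i}\}) \cup V_i$ otherwise, and set $y'_{\phi(J)} := y_J$ (and zero elsewhere). The quotient map puts the edge sets of $\G$ and $\G/V_i$ in bijection, so $\phi(J)$ is an independent set of $\G$: $V_i$ is independent by assumption, and any $\G$-neighbor of $V_i$ is a $\G/V_i$-neighbor of $v^{V_i}$ and hence is absent from $J$. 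Vertices of $V \setminus V_i$ are covered by $y'$ with exactly the same mass as by $y$, and each $v \in V_i$ is covered by the sets $\phi(J)$ with $v^{V_i} \in J$, whose total weight is at least one. The objective is preserved, so $\chi^f(\G) \leq \sum_J y_J = \chi^f(\G/V_i)$.

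Combining the two steps and minimizing over $i \in \{1,2,3\}$ delivers the desired upper bound. For tightness, when a color class is a singleton $V_i = \{v\}$ the contraction is trivial and $\G/V_i$ is isomorphic to $\G$. Applying Theorem~\ref{thm_fractional_chrom_number} directly to $\G$ with $v_p = v$ then gives $\chi^f(\G) = 2 + 1/(\rho_i - 1) \geq 2 + \min_{j \in \{1,2,3\}} 1/(\rho_j - 1)$, while the upper bound just proved supplies the reverse inequality, forcing equality.

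The main point to check carefully is that the lifting map $\phi$ is well-defined, i.e.\ always produces an independent set of $\G$, and that vertices in $V_i$ end up with total fractional mass at least one; both hinge on the independence of $V_i$ together with the edge bijection under the quotient map, which is the only place where the structural assumptions are actually used. The remainder is a direct assembly on top of Theorem~\ref{thm_fractional_chrom_number}.
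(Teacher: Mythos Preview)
Your proposal is correct and follows essentially the same approach as the paper: apply Theorem~\ref{thm_fractional_chrom_number} to each contracted graph $\G/V_i$, lift the resulting fractional coloring back to $\G$ by replacing the contracted vertex $v^{V_i}$ with the full color class $V_i$ whenever it appears in an independent set, and minimize over $i$; tightness follows from the singleton case where the contraction is trivial. The only minor differences are that you lift an arbitrary optimal fractional coloring of $\G/V_i$ (proving the cleaner inequality $\chi^f(\G) \leq \chi^f(\G/V_i)$) rather than the specific one built in the proof of Theorem~\ref{thm_fractional_chrom_number}, and you explicitly verify that $\G/V_i$ is non-bipartite so that its odd girth is well-defined---both are small improvements in presentation, not genuine departures.
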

\begin{proof}
We prove this theorem by constructing three feasible solutions of value $2 + 1/(\rho_i -1)$ for each $i \in \{1,2,3\}$ to the primal linear program of the fractional chromatic number on the graph $\G$. 

Fix an $i \in \{1,2,3\}$ and consider the graph $\tilde{\G}:= \G / V_i = (\tilde{V}, \tilde{E})$ with odd girth $2 \rho_i -1$. We denote the contracted node by $\tilde{v} \in \tilde{V}$. Since this graph is bipartite if we were to remove $\tilde{v}$, we know that its fractional chromatic number is equal to $2 + 1/(\rho_i -1)$ by Theorem \ref{thm_fractional_chrom_number}. Let $\{\tilde{I}_k, k \in [2\rho_i -1]\}$ be the independent sets in the support of the optimal primal solution of the graph $\tilde{\G}$ constructed in the proof of this theorem. For each of these independent sets, we extend them to the original graph in the following way:
\[I_k = \begin{cases} \tilde{I}_k \qquad &\text{if } \tilde{v} \notin \tilde{I}_k \\ (\tilde{I}_k \setminus \tilde{v}) \cup V_i \qquad &\text{if } \tilde{v} \in \tilde{I}_k. \end{cases}\]
In words, if $\tilde{v}$ happens to belong to $\tilde{I}_k$, we replace it by $V_i$ to get a valid independent set in the original graph. Assigning fractional value $y(I_k) = 1/(\rho_i -1)$ for every $k \in [2 \rho_i -1]$ yields a feasible primal solution with objective value $2 + 1/(\rho_i -1)$. Since we can do this for every $i \in \{1,2,3\}$, and the optimal minimum value of the primal linear program is at most the objective value of any of these feasible solutions, the proof is finished.

Moreover, this upper bound is in fact tight, since it holds with equality when one of the color classes only contains one vertex by Theorem \ref{thm_fractional_chrom_number}.
\end{proof}

It is now straightforward to extend this result for the integrality gap by Theorem \ref{thm_singh_integrality_gap}.
\begin{corollary}
Let $\G = (V,E)$ be a 3-colorable graph with color classes $V = V_1 \cup V_2 \cup V_3$. The integrality gap $IG(\G)$ of the standard linear programming relaxation $P(\G)$ satisfies:
\[IG(\G) \leq 1 + \min_{i \in \{1,2,3\}} \frac{1}{2 \rho_i - 1}\]
where $2 \rho_i -1$ is the odd girth of the contracted graph $\G/V_i$ for each $i \in \{1,2,3\}$. Moreover, equality holds if one color class only contains one vertex.
\end{corollary}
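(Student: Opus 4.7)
The plan is to combine Singh's theorem (Theorem \ref{thm_singh_integrality_gap}) with the upper bound on the fractional chromatic number just proved (Theorem \ref{thm_upper_bound_FCN}). The key observation is that the map $x \mapsto 2 - 2/x$ is monotonically increasing on $x > 0$, so any upper bound on $\chi^f(\G)$ translates directly into an upper bound on $IG(\G)$.

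First, I would fix an arbitrary $i \in \{1,2,3\}$ and apply Theorem \ref{thm_upper_bound_FCN} to get $\chi^f(\G) \leq 2 + 1/(\rho_i - 1)$. Plugging this into Singh's formula yields
\[
IG(\G) = 2 - \frac{2}{\chi^f(\G)} \leq 2 - \frac{2}{2 + \frac{1}{\rho_i - 1}} = 2 - \frac{2(\rho_i -1)}{2\rho_i - 1} = \frac{2 \rho_i}{2 \rho_i - 1} = 1 + \frac{1}{2 \rho_i - 1}.
\]
Since this holds for every $i \in \{1,2,3\}$, taking the minimum over $i$ gives the claimed bound.

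For the tightness assertion, if one of the color classes is a single vertex, say $V_i = \{v_p\}$, then $\G / V_i = \G$, so Theorem \ref{thm_fractional_chrom_number} applies directly and gives $\chi^f(\G) = 2 + 1/(\rho_i - 1)$ exactly. Substituting this equality into Singh's formula makes the computation above an equality as well, so $IG(\G) = 1 + 1/(2 \rho_i - 1)$, which matches the minimum in the stated bound.

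There is essentially no main obstacle here — the statement is a routine corollary obtained by composing two previously established results, and the only mild care needed is in verifying the algebraic simplification $2 - 2(\rho_i - 1)/(2 \rho_i - 1) = 1 + 1/(2 \rho_i - 1)$ and in noting that the monotonicity of $2 - 2/x$ means that the minimum in the $\chi^f$ bound corresponds (under the transformation) to the minimum in the $IG$ bound, so the minimum is preserved rather than inverted.
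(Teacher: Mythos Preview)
Your proposal is correct and follows exactly the approach the paper intends: the paper merely states that the corollary is a straightforward consequence of combining Theorem~\ref{thm_upper_bound_FCN} with Singh's Theorem~\ref{thm_singh_integrality_gap}, and you have carried out precisely that computation, including the monotonicity observation and the algebraic simplification. Your tightness argument is also in line with the paper's, and the implicit sandwich (the exact value $1 + 1/(2\rho_j - 1)$ for the singleton class $V_j$ is both an upper bound and equals the stated minimum since the minimum is itself at most $1/(2\rho_j - 1)$) closes the equality claim.
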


\bibliographystyle{plain}
\bibliography{references}

\appendix

\section{Arbitrary Set to Bipartite: Omitted Proofs}

\begin{theorem}
Let $\G = (V,E)$ be a graph and $S \subset V$ such that $\G \setminus S = (A \cup B, E')$ is bipartite. For any $w \in \Qw$, the approximation ratio satisfies 
\[R(w) \leq 1 + \alpha \qquad \text{with } \alpha \in [0,1]\]
if the contracted graph $\G / S$ is bipartite.
\end{theorem}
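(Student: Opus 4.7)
The plan is to parallel the proof of Theorem~\ref{thm_bound_ratio_general} but to exploit the bipartiteness of $\tilde{\G} = \G/S$ in order to eliminate the $(1-\alpha)/\rho$ term. The key observation is that when $\tilde{\G}$ is bipartite, there exists a single feasible cover of $\G \setminus S$ that completely avoids the edges incident to $S$, so one no longer needs to pack several edge-separate covers.

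Concretely, I would fix a bipartition of $\tilde{\G}$ and let $U \subset V \setminus S$ denote the side of this bipartition containing the contracted vertex $v^S$, with $v^S$ itself removed. Since each side of a bipartition covers all edges of a bipartite graph, and the edge set of $\tilde{\G} \setminus v^S$ is exactly $E'$, the set $U$ is a feasible vertex cover of $\G \setminus S$. Moreover, because $U$ sits on the same side of the bipartition of $\tilde{\G}$ as $v^S$, it has no edges to $v^S$ in $\tilde{\G}$, and hence no edges to $S$ in $\G$. As $U$ is also independent in $\G \setminus S$ (being contained in one side of a bipartition), one concludes that $E_U = \emptyset$.

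Applying the weight-counting identity \eqref{eq_count_cover_dual}, whose derivation does not actually rely on the independence of the bipartizing set and extends verbatim from $I$ to an arbitrary $S$, then yields $w(U) = y(E') + y(E_U) = y(E')$, so $w(OPT(\G \setminus S)) \leq y(E')$. Combining this with the decomposition $w(S) = 2\alpha + y(\delta(S))$ from \eqref{eq_arb_set_weight_S}, the normalization $y(E) = \alpha + y(\delta(S)) + y(E') = 1$, and the bound $w(OPT(\G)) \geq 1$ from Lemma~\ref{lemma_lower_bound_opt}, I obtain
\[R(w) \leq w(S) + w(OPT(\G \setminus S)) \leq 2\alpha + y(\delta(S)) + y(E') = 2\alpha + (1 - \alpha) = 1 + \alpha.\]

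There is no substantive obstacle here: once one observes that a bipartition of $\tilde{\G}$ supplies a cover with no dual charge on edges touching $S$, the odd-girth slack simply disappears and the same chain of inequalities as in Theorem~\ref{thm_bound_ratio_general} closes the argument.
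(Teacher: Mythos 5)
Your proposal is correct and matches the paper's proof essentially verbatim: the paper also takes the cover $\tilde{A} \setminus v^S$, i.e., the side of the bipartition of $\tilde{\G}$ containing $v^S$ with $v^S$ removed, observes that it charges every edge of $E'$ exactly once and no edge of $\delta(S)$, and closes with the same chain $R(w) \leq 2\alpha + y(\delta(S)) + y(E') = 1 + \alpha$. Your explicit remark that the counting identity \eqref{eq_count_cover_dual} extends from an independent set to an arbitrary $S$ is a correct (and welcome) clarification of a step the paper leaves implicit.
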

\begin{proof}
The only change with respect to the previous proof is the bound on $w(OPT(\G \setminus S))$ in \eqref{eq_arb_set_opt_G_setminus_S}. We denote the contracted graph by $\tilde{\G} = \G / S$ and by $v^S$ the contracted vertex. Suppose $\tilde{\G}$ admits the bipartition $(\tilde{A} \cup \tilde{B}, \tilde{E})$ and assume without loss of generality that $v^S \in \tilde{A}$. Note that $\tilde{E} = E' \cup \delta(v^S)$.

Any feasible cover of $\G \setminus S$ needs to count the dual value of every edge in $E'$ at least once. Taking the cover $\tilde{A} \setminus v^S$ counts every edge in $E'$ exactly once, showing that $w(OPT(\G\setminus S)) = y(E')$. Hence, using $y(E) = \alpha + y(\delta(S)) + y(E') = 1$, we get
\[R(w) \leq 2 \alpha + y(\delta(S)) + y(E') = 1 + \alpha.\]
\end{proof}

\begin{theorem}
Let $\alpha \in [0,1]$ and let $\rho \geq 2$. There exists a non-bipartite graph $\G = (V,E)$, with weights $(y,w) \in \Qyw$, and a set $S \subset V$ with $y(E[S]) = \alpha$, where $\G/S$ has odd girth $2 \rho -1$ and which satisfies
\[R(w) = \left(1+\frac{1}{\rho}\right) (1 - \alpha) + 2 \alpha.\]
\end{theorem}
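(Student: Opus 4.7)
The plan is to construct a tight example by combining the shortest-odd-cycle construction from Section~\ref{section_ind_set_to_bipartite} (which attains $1+1/\rho$ at $\alpha = 0$) with an isolated ``internal'' edge fully contained in $S$ (which attains $2$ at $\alpha = 1$), scaled appropriately in dual mass to tune $\alpha$ anywhere in $[0,1]$.

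Concretely, I would take $\G$ to be the vertex-disjoint union of an odd cycle $C = v_p v_1 v_2 \cdots v_{2\rho-2} v_p$ of length $2\rho-1$ and a single extra edge $(u_1, u_2)$, and set $S := \{v_p, u_1, u_2\}$. Then $\G \setminus S$ is a path on $v_1, \dots, v_{2\rho-2}$ and hence bipartite, while in $\G/S$ the edge $(u_1,u_2)$ becomes a self-loop (dropped by Definition~\ref{def_graph_contraction}) and the cycle $C$ becomes an odd cycle of length $2\rho-1$ through $v^S$, so the odd girth of $\G/S$ is exactly $2\rho-1$. For the dual, assign mass $(1-\alpha)/\rho$ to the two edges incident to $v_p$ and alternate $0$ and $(1-\alpha)/\rho$ along $C$ (so that $\rho$ edges of $C$ carry $(1-\alpha)/\rho$ each, summing to $1-\alpha$), and put dual value $\alpha$ on the edge $(u_1,u_2)$. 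By construction $y(E) = 1$ and $y(E[S]) = \alpha$, and the resulting $w_v = y(\delta(v))$ satisfies $w(V)=2$, so $w \in \Qw$ by Lemma~\ref{lemma_feasible_weights}.

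It then remains to compute the three quantities entering $R(w)$. The weight of $S$ is immediate: $w(S) = 2(1-\alpha)/\rho + 2\alpha$, since $v_p$ has two incident cycle-edges of weight $(1-\alpha)/\rho$ each and $u_1, u_2$ each see only the edge carrying mass $\alpha$. For $\G \setminus S$, the remainder is a path of length $2\rho-3$ with all vertices of weight $(1-\alpha)/\rho$, whose minimum vertex cover has $\rho-1$ vertices, giving $w(OPT(\G \setminus S)) = (\rho-1)(1-\alpha)/\rho$. Finally, for $w(OPT(\G))$ I would exhibit a feasible cover of weight exactly $1$ (picking $\rho$ vertices of $C$ avoiding $v_p$, together with one endpoint of $(u_1,u_2)$, yields $\rho\cdot(1-\alpha)/\rho + \alpha = 1$); the matching lower bound is Lemma~\ref{lemma_lower_bound_opt}. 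Summing the numerator and dividing gives
\[
R(w) = \frac{2(1-\alpha)/\rho + 2\alpha + (\rho-1)(1-\alpha)/\rho}{1} = \left(1 + \tfrac{1}{\rho}\right)(1-\alpha) + 2\alpha,
\]
as desired.

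The construction itself is the main conceptual content; the only slightly delicate step is verifying that $w(OPT(\G)) = 1$, because an arbitrary vertex cover of the cycle component could in principle be cheaper if $v_p$'s doubled weight is picked up. I would handle this by a small case analysis on whether the optimal cover of $C$ contains $v_p$ or not, showing that the cheaper option is precisely to avoid $v_p$ and use $\rho$ vertices from $v_1, \dots, v_{2\rho-2}$, matching the LP lower bound exactly. Once that is in hand, all the other calculations are routine, and the bound from Theorem~\ref{thm_bound_ratio_general} is attained with equality for every $\alpha \in [0,1]$ and every $\rho \geq 2$.
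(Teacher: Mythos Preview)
Your construction is correct and all the verifications go through as you describe. The graph is non-bipartite (it contains the odd cycle $C$), $\G \setminus S$ is a path and hence bipartite, $\G/S$ is exactly the odd cycle of length $2\rho-1$ once the self-loop from $(u_1,u_2)$ is dropped, and the arithmetic for $w(S)$, $w(OPT(\G\setminus S))$ and $w(OPT(\G))$ is right. The step you flag as ``slightly delicate'' is in fact immediate: Lemma~\ref{lemma_lower_bound_opt} gives $w(OPT(\G)) \geq 1$, and your explicit cover of weight $1$ gives the matching upper bound, so no case analysis is needed.

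The paper takes a different route. Instead of a disconnected graph, it keeps the instance connected by starting from the odd cycle of length $2\rho-1$ with distinguished node $v^S$ and then replacing $v^S$ by a \emph{triangle} $S$ whose three internal edges carry dual values $\alpha,0,0$; the two cycle-edges that were incident to $v^S$ are attached to the endpoints of the $\alpha$-edge. The triangle (rather than a single edge) is needed precisely to keep $\G$ non-bipartite in this connected setting. Your approach sidesteps this issue entirely by keeping the odd cycle $C$ intact inside $\G$ and placing the $\alpha$-mass on a completely separate component, which makes the computations decouple and is arguably cleaner. The paper's construction, on the other hand, shows that tightness already holds for connected instances, which is a marginally stronger statement than what the theorem asks for.
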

\begin{proof}
An example of such a graph can be constructed as follows. We first construct $\G/S$: take an odd cycle of length $2 \rho -1$ with a distinguished node $v^S$ and assign dual value $(1-\alpha)/\rho$ to both edges incident to it. Alternatively assign dual values $0$ and $(1-\alpha)/\rho$ along the odd cycle for the remaining edges. In order to construct $\G$, replace $v^S$ by a triangle $S$ with dual edges set to $\alpha, 0$ and $0$, where the two previous incident edges to $v^S$ are adjacent to the endpoints of the edge with value $\alpha$. Note that we replace it with a triangle instead of a single edge in order to avoid $\G$ becoming bipartite. Similarly to the proof of Theorem \ref{thm_tightness_bound}, one can check that
\[w(S) = 2 \alpha + \frac{2\: (1- \alpha)}{\rho} ; \quad w(OPT(\G \setminus S)) = \frac{(1-\alpha)(\rho -1)}{\rho}; \quad w(OPT(\G)) = 1.\]
Therefore,
\[R(w) = 2 \alpha + \frac{2\: (1- \alpha)}{\rho} + \frac{(1-\alpha)(\rho -1)}{\rho} = \left(1+\frac{1}{\rho}\right) (1 - \alpha) + 2 \alpha.\]
\end{proof}
\begin{theorem}
Let $\alpha \in [0,1]$. There exists a non-bipartite graph $\G = (V,E)$, with weights $(y,w) \in \Qyw$, and a set $S \subset V$ with $y(E[S]) = \alpha$, where $\G/S$ is bipartite and which satisfies
\[R(w) = 1 + \alpha.\]
\end{theorem}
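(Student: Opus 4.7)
The plan is to adapt the construction used in the previous theorem to the bipartite setting of $\tilde{\G} = \G/S$. Since we need $\G/S$ to be bipartite, the simplest choice is to let $\G/S$ be a single edge; to make $\G$ itself non-bipartite, we then replace the contracted node $v^S$ by a triangle $S$ carrying all of the ``odd structure'' of the instance. The dual mass $\alpha$ is placed on a single edge of that triangle while the remaining mass $1-\alpha$ lies on the pendant edge, giving us direct control of both $y(E[S])$ and $w(OPT(\G))$.

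Concretely, I would take $V = \{a,b,c,d\}$, $E = \{(a,b),(b,c),(c,a),(a,d)\}$, and set $S = \{a,b,c\}$. Define the dual values $y(a,b) = \alpha$, $y(b,c) = y(c,a) = 0$, and $y(a,d) = 1-\alpha$. The induced weights are $w_a = 1$, $w_b = \alpha$, $w_c = 0$, $w_d = 1-\alpha$, so $(y,w) \in \Qyw$ with $w(V) = 2$, $y(E) = 1$, hence $w \in \Qw$. The graph $\G/S$ consists of two vertices $v^S, d$ joined by one edge and is trivially bipartite, while $\G$ itself contains the triangle $S$ and is non-bipartite. Moreover $y(E[S]) = \alpha + 0 + 0 = \alpha$ as required.

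The remaining work is to compute the three quantities in $R(w)$. First, $w(OPT(\G)) = 1$: the cover $\{a,c\}$ is feasible and has weight exactly $1$, and Lemma~\ref{lemma_lower_bound_opt} gives the matching lower bound. Next, $\G \setminus S$ is the isolated vertex $d$, so $w(OPT(\G \setminus S)) = 0$. Finally, $w(S) = w_a + w_b + w_c = 1 + \alpha$. Substituting into the definition of the approximation ratio,
\[
R(w) \;=\; \frac{w(S) + w(OPT(\G \setminus S))}{w(OPT(\G))} \;=\; \frac{(1+\alpha) + 0}{1} \;=\; 1 + \alpha.
\]

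There is no real obstacle here; the only small thing to verify carefully is that the construction behaves correctly at the endpoints $\alpha = 0$ and $\alpha = 1$. For $\alpha = 0$ the triangle carries no dual weight and the ratio collapses to $1$, matching the trivial case where $\G$ is effectively handled as bipartite; for $\alpha = 1$ the pendant edge disappears from the dual and $w(S) = 2$ against $w(OPT(\G)) = 1$, recovering the worst-case ratio of $2$. Thus the single family above simultaneously witnesses tightness of the bound $1+\alpha$ across the entire interval $[0,1]$.
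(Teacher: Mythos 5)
Your construction is correct, and every step checks out: the pair $(y,w)$ you define lies in $\Qyw$ with $w(V)=2$ and $y(E)=1$, the cover $\{a,c\}$ together with Lemma~\ref{lemma_lower_bound_opt} pins down $w(OPT(\G))=1$, and the remaining two quantities are immediate, so $R(w)=1+\alpha$ for all $\alpha\in[0,1]$. The underlying idea is the same as in the paper --- concentrate the dual mass $\alpha$ on a single edge inside $S$, keep a zero-valued edge adjacent to it so that the integral optimum exactly matches the LP lower bound, and read off $R(w)=2\alpha+y(\delta(S))+y(E')$ --- but the concrete instance differs. The paper takes $\G$ to be an odd cycle of arbitrary length with $S$ the two endpoints of the $\alpha$-edge, so that $\G\setminus S$ is a nontrivial path carrying the mass $1-\alpha$ (distributed arbitrarily), and $\G/S$ is an even cycle. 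You instead use a triangle with a pendant edge and take $S$ to be the entire triangle, which collapses $\G\setminus S$ to an isolated vertex and $\G/S$ to a single edge. Your version buys a shorter, fully explicit verification on a four-vertex graph; the paper's version exhibits an infinite family of witnesses in which both $\G\setminus S$ and the dual distribution on $E'$ are nondegenerate, which makes clearer that the tightness is not an artifact of $\G\setminus S$ being edgeless. Either is a complete proof of the existence statement.
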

\begin{proof}
Let $\G$ be an arbitrary odd cycle. Consider an arbitrary edge $(u,v) \in E$ and assign it dual value $\alpha$. The set $S$ is defined to be $S = \{u,v\}$. Assign dual value zero to the edge $(u,w) \in E$, where $w$ is the second neighbour of $u$ in the cycle. For the remaining edges, arbitrarily assign dual values, while ensuring that they sum up to $1-\alpha$. The fact that one edge is equal to zero is necessary in order to get the exact formula $w(OPT(\G)) = 1$, a feasible cover showing $w(OPT(\G)) \leq 1$ being the following: take both endpoints of the edge $(u,w)$ and take remaining vertices alternatively (hence always skipping one) along the odd cycle. All the edges are counted once, except for $(u,w)$, which is counted twice but has value zero. Moreover, $w(S) = 2 \alpha + y(\delta(S))$ and $w(OPT(\G \setminus S)) = y(E')$, where $E'$ is the edge set of the bipartite graph $\G \setminus S$. Therefore,
\[R(w) = 2 \alpha + y(\delta(S)) + y(E') = 1 + \alpha.\]
\end{proof}

\section{Weight Space Polytope: Vertices}
We denote by $\mathds{1}_v \in \mathbb{R}^V$ the indicator vector of a vertex $v\in V$.

\begin{theorem}
The polytope $\Qw$ of a graph $\G = (V,E)$ can be expressed as:
\[\Qw = \text{\emph{conv}}\Big(\Big\{\mathds{1}_u + \mathds{1}_v \; \big| \; (u,v) \in E\Big\}\Big).\]
Moreover, $\mathds{1}_u + \mathds{1}_v$ is an extreme point of $\Qw$ for every edge $(u,v) \in E$.
\end{theorem}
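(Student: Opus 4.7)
The plan is to prove equality of the two polytopes by double inclusion and then deduce extremality from the structure of the representation.

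First, for the inclusion $\text{conv}(\{\mathds{1}_u + \mathds{1}_v \mid (u,v) \in E\}) \subseteq \Qw$, I would verify that each generator lies in $\Qw$: for an edge $e^\ast = (u,v)$, set $y_{e^\ast} = 1$ and $y_e = 0$ for $e \neq e^\ast$. Then $y(\delta(u)) = y(\delta(v)) = 1$, $y(\delta(z)) = 0$ for $z \notin \{u,v\}$, which matches the weight vector $\mathds{1}_u + \mathds{1}_v$ coordinatewise, and $w(V) = 2$. Since $\Qw$ is convex, the convex hull lies inside.

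For the reverse inclusion $\Qw \subseteq \text{conv}(\{\mathds{1}_u + \mathds{1}_v\})$, I would take an arbitrary $w \in \Qw$ with certificate $y \in \mathbb{R}^E_+$ satisfying $y(\delta(v)) = w_v$ for every $v$. The normalization observation made just before the statement gives $y(E) = 1$, so $y$ is a probability distribution on the edges. Then using $w_v = y(\delta(v))$ and swapping the order of summation,
\[
w \;=\; \sum_{v \in V} w_v \, \mathds{1}_v \;=\; \sum_{v \in V} \sum_{e \in \delta(v)} y_e \, \mathds{1}_v \;=\; \sum_{e = (u,v) \in E} y_e \bigl(\mathds{1}_u + \mathds{1}_v\bigr),
\]
which exhibits $w$ as a convex combination of the claimed generators.

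Finally, to see that each $\mathds{1}_u + \mathds{1}_v$ is actually an extreme point (and not merely in the convex hull of the others), I would suppose a convex representation $\mathds{1}_u + \mathds{1}_v = \sum_{e' \in E} \lambda_{e'}(\mathds{1}_{u'} + \mathds{1}_{v'})$ with $\lambda_{e'} \geq 0$ and $\sum_{e'} \lambda_{e'} = 1$. For any vertex $z \notin \{u,v\}$, the left-hand side has coordinate $0$, so no edge incident to $z$ may carry positive weight. Thus only edges with both endpoints in $\{u,v\}$ contribute, and in the simple graph $\G$ the only such edge is $(u,v)$ itself, forcing $\lambda_{(u,v)} = 1$. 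Hence $\mathds{1}_u + \mathds{1}_v$ cannot be written as a non-trivial convex combination of the remaining generators and is therefore extreme. There is no real obstacle here; the only subtle point is the preliminary identity $y(E) = 1$, which is precisely what makes the decomposition above a convex combination rather than merely a conic one.
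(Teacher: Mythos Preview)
Your proof is correct, and the argument establishing $\Qw = \text{conv}\bigl(\{\mathds{1}_u + \mathds{1}_v \mid (u,v)\in E\}\bigr)$ is essentially identical to the paper's: both verify membership of each generator via the obvious unit dual certificate, and both exhibit an arbitrary $w\in\Qw$ as the convex combination $\sum_{e=(u,v)} y_e(\mathds{1}_u+\mathds{1}_v)$ using $y(E)=1$.

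Where you diverge is in the extremality argument. The paper argues directly from the definition: it supposes $\mathds{1}_u+\mathds{1}_v=\tfrac{1}{2}(w^1+w^2)$ with $w^1,w^2\in\Qw$ distinct, deduces from nonnegativity that $w^1,w^2$ are supported on $\{u,v\}$, and then shows that the perturbed weights $w^1,w^2$ violate membership in $\Qw$ by exhibiting an integral LP solution cheaper than the all--$\tfrac{1}{2}$ solution, contradicting Lemma~\ref{lemma_feasible_weights}. You instead exploit the convex-hull description already in hand: any convex representation of $\mathds{1}_u+\mathds{1}_v$ by the generators must, by looking at the zero coordinates, put all its mass on edges with both endpoints in $\{u,v\}$, and simplicity of $\G$ forces this to be $(u,v)$ alone; hence the generator is irredundant and therefore extreme. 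Your route is more elementary and avoids circling back to the LP, at the modest cost of invoking the (standard) fact that an irredundant generator of a finitely generated polytope is an extreme point, and of using that $\G$ is simple. The paper's route is slightly more robust to multigraph variants but leans on the earlier LP characterization.
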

\begin{proof}
Let $w \in \Qw$ and $y \in \mathbb{R}^E_+$ such that $y(\delta(v)) = w_v$ for every $v \in V$, as well as $y(E) = 1$. Observe that:
\[w = \sum_{(u,v) \in E} y_{(u,v)} \big(\mathds{1}_u + \mathds{1}_v \big).\]
By looking at this equality coordinate by coordinate, for a fixed vertex $v \in V$, the contribution from the left hand side is $w_v$ whereas the contribution from the right hand side is $y(\delta(v))$. We have thus managed to decompose an arbitrary $w \in \Qw$ into a convex combination of the vectors $\Big\{\mathds{1}_u + \mathds{1}_v \; \big| \; (u,v) \in E\Big\}$.

Let $(u,v) \in E$ and let us now show that $\bar{w} := \mathds{1}_u + \mathds{1}_v$ is an extreme point of $\Qw$. Firstly, it is clear that $\bar{w} \in \Qw$, the dual solution satisfying complementary slackness being $y_{(u,v)} = 1$ and $y_e = 0$ for every $e \in E \setminus (u,v)$, and the sum of all the weights being indeed equal to two. Suppose for contradiction that there exist distinct $w^1, w^2 \in \Qw$ such that $\bar{w} = \frac{1}{2}(w^1 + w^2)$. Notice that, since the weight vectors are required to be non-negative, $\bar{w}_z = w^1_z = w^2_z = 0$ for every $z \in V \setminus \{u,v\}$. Thus, we can assume without loss of generality that $w^1_u = w^2_v = 1 + \epsilon$ and  $w^1_v = w^2_u = 1 - \epsilon$ for some $\epsilon > 0$. However, the fully half-integral solution is now not an optimal LP solution on the weights $w^1$ and $w^2$. Indeed, on the weight function $w^1$, the feasible solution $V \setminus \{u\}$ has objective value $1-\epsilon$, whereas the fully half-integral solution has weight $1$. Similarly, on the weight function $w^2$, the feasible solution $V \setminus v$ has objective value $1-\epsilon$. By Lemma \ref{lemma_feasible_weights}, $w^1, w^2 \notin \Qw$, leading to a contradiction. The weight function $\bar{w}$ thus cannot be written as a non-trivial convex combination of two other points in the polytope $\Qw$ and is therefore an extreme point (or a vertex) of $\Qw$.
\end{proof}

\end{document}